\newcommand{\C}{\mathbb C}
\newtheorem{theorem}{Theorem}
\newtheorem{cor}{Corollary}
\newtheorem{definition}{Definition}
\newtheorem{prop}{Proposition}
\newtheorem{example}{Example}
\newcommand{\beq}{\begin{eqnarray}}
\newcommand{\eeq}{\end{eqnarray}}
\newcommand{\proj}[1]{ \ket{#1}\!\bra{#1} }
\newcommand{\saz}{\sigma_\alpha^z}
\newcommand{\sax}{\sigma_\alpha^x}
\newcommand{\m}[1]{\mathcal{#1}}
\begin{document}
\title{A two-player dimension witness based on embezzlement, and an elementary proof of the non-closure of the set of quantum correlations}

\author{Andrea Coladangelo}

\affil{Computing and Mathematical Sciences,	
Caltech\\ 	
{acoladan@caltech.edu}
}
\date{}

\sloppy

\maketitle

\begin{abstract}We describe a two-player non-local game, with a fixed small number of questions and answers, such that an $\epsilon$-close to optimal strategy requires an entangled state of dimension $2^{\Omega(\epsilon^{-1/8})}$. Our non-local game is inspired by the three-player non-local game of Ji, Leung and Vidick \cite{ji2018three}. It reduces the number of players from three to two, as well as the question and answer set sizes. Moreover, it provides an (arguably) elementary proof of the non-closure of the set of quantum correlations, based on embezzlement and self-testing. In contrast, previous proofs \cite{slofstra2019set, dykema2017non, musat2018non} involved representation theoretic machinery for finitely-presented groups and $C^*$-algebras. 
\end{abstract}


\section{Introduction}
A non-local game consists of a one-round interaction between a trusted referee, or verifier, (who asks questions) and two or more spatially isolated players, or provers (who provide answers). The players are cooperating to win the game, but spatial isolation prevents them from communicating. Bell's theorem \cite{Bell:64a}, a landmark result in physics, asserts that there exist games for which players who share entanglement can outperform players who do not, the most famous example being the CHSH game \cite{clauser1969proposed}. The most immediate application of non-local games is to ``test quantumness'': a referee who observes a winning probability in a non-local game which exceeds what is attainable classically can have high confidence that the players (or devices) she is interacting with were sharing entanglement. A more refined analysis of non-local games allows the referee to obtain more precise characterizations of the devices involved. For example, in some cases, the referee might be able to infer that the devices must share \textit{high-dimensional} entanglement \cite{brunner2008testing}. In special cases, the referee might even be able to completely characterize the quantum state inside the devices and the measurements that they are performing (up to local isometries and some small error) \cite{mayers2004self, mckague2012robust, coladangelo2017all}.

In this work, we focus on the study of non-local games as witnesses of high-dimensional entanglement. This has had on the one hand fruitful applications in quantum cryptography, and on the other it has shed light on foundational questions in the theory of entanglement. Before proceeding further, we clarify that when we use the term $\textit{non-local game}$ we do not restrict ourselves to games with a binary outcome (``win'' or ''lose''), but rather we consider games specified by an arbitrary function $V$ taking values in $\mathbb{R}$ which determines the players' score as a function of questions and answers. There is a natural one-to-one correspondence between such non-local games and Bell inequalities. 

\paragraph{Certifying high-dimensional entanglement - previous work and state of the art} Non-local games with the property that a lower bound on the score provides a lower bound on the dimension of the players' quantum systems are referred to as \textit{dimension witnesses}. The study of games (or correlations) with such a property was initiated by Brunner et al. \cite{brunner2008testing}, who coined the term. In this work, we focus on dimension witnesses that can certify entanglement of arbitrarily high dimension.

The first example of a game which cannot be won perfectly with any finite amount of entanglement was proposed by Leung, Toner and Watrous \cite{leung2013coherent}, and is intimately connected to our result. The game that they introduced is not a non-local game in the usual sense, since it involves \textit{quantum} questions and answers. However, it has the property that in order to succeed with high probability, the players have to perform a coherent state exchange which requires them to share an embezzling state of high dimension. More precisely, the game forces the two players to coherently transform a product state of two qubits into an EPR pair, using only local operations. This task is, of course, impossible to perform exactly, but can be performed to arbitrarily high precision if the two players share an auxiliary entangled state of sufficiently high dimension (referred to as an \textit{embezzling} state). 

Subsequently, several examples of dimension witnesses for entanglement of arbitrarily high dimension have been proposed over the years consisting of non-local games with classical questions and answers \cite{briet2011generalized, slofstra2011lower, brunner2013dimension, mancinska2014unbounded, chao2016test, coudron2016parallel, coladangelo2017parallel, coladangelo2017robust, natarajan2017quantum, coladangelo2017separation}. However, all of these examples involve \textit{families} of non-local games whose questions and answers increase as the witnessed dimension increases. For some time, it was an open question to determine whether there exists a non-local game, with a finite number of questions and answers, whose optimal value cannot be attained by any finite-dimensional strategy (in the tensor product model), but which can be attained in the limit of finite-dimensional strategies. This question was answered recently by Slofstra in a sequence of two breakthrough works \cite{slofstra2020tsirelson, slofstra2019set}, where he introduces novel techniques based on the representation theory of finitely-presented groups. Slofstra's result implies that the set of quantum correlations is not closed.

An alternative proof of the latter result was given subsequently by Dykema, Paulsen and Prakash \cite{dykema2017non}, and more recently by Musat and R\o rdam \cite{musat2018non}, using techniques based on the representation theory of $C^*$-algebras. The games constructed in \cite{dykema2017non} and $\cite{musat2018non}$ have significantly smaller question and answer set sizes, namely $5$ and $2$.

In contrast, a more recent result by Coladangelo and Stark \cite{coladangelo2018unconditional} gives an example of a point in the set of quantum correlations on question sets of size $5$ and answer sets of size $3$ which cannot be attained using finite-dimensional entanglement but $\textit{can}$ be attained exactly using infinite-dimensional entanglement, in the tensor product model. This asserts that the  the set $\mathcal{C}_q$ of quantum correlations attainable with finite-dimensional entanglement is strictly contained in the set $\mathcal{C}_{qs}$ of correlations attainable with possibly infinite-dimensional entanglement.

All of the above results are not explicit or quantitative about the tradeoff between winning probability (or expected score in the game) and the dimension required to attain it. What we desire from a dimension witness is a quantitative statement of the following form: if the players' score is $\epsilon$-close to optimal, then their strategy has dimension at least $f(\epsilon)$, where $f(\epsilon)$ is a function that tends to infinity as $\epsilon$ tends to zero. In \cite{slofstra2018entanglement}, Slofstra and Vidick analyze such a tradeoff for the machinery introduced by Slofstra in \cite{slofstra2020tsirelson}, and they relate such tradeoff to a quantity called the \textit{hyperlinear profile} of a group. In a subsequent work \cite{slofstra2018group}, Slofstra provides a finitely-presented group whose hyperlinear profile is at least subexponential. As a corollary, this yields a two-player non-local game, with question and answer sets of finite size, with the property that a $1-\epsilon$ winning probability requires dimension at least $2^{\Omega(\epsilon^{-c})}$ to attain for some constant $0<c<1$. The caveat of such a non-local game is that its description is quite involved and the size of question and answer sets is large. Moreover, it is not clear whether a winning probability of $1$ in the game can be attained in the limit of finite-dimensional strategies or not (although it can be attained in the commuting-operator model). These caveats not only make an experimental demonstration of such a dimension witness infeasible, but, more importantly, they somewhat conceal what is truly happening behind the scenes: the resulting non-local game, although remarkable for its behaviour, does not arguably provide much intuition about what is causing the exponential blow-up of the dimension.

A much simpler game with a similar exponential tradeoff between optimality and dimension, and without this caveat, but involving three players, was proposed recently by Ji, Leung and Vidick $\cite{ji2018three}$. Their work constitutes, in some sense, a return to the original ideas of Leung, Toner and Watrous'coherent state-exchange game \cite{leung2013coherent}, which are cleverly translated to a setting in which all questions and answers are classical. At the heart of the three-player non-local game of Ji, Leung and Vidick is the idea of delegating the actions of the \textit{quantum} verifier of the coherent state-exchange game to a third player. By combining different non-local tests, the verifier is still able, using only classical communication, to enforce that two of the three players must be performing a coherent state-exchange which involves a high-dimensional embezzling state as a resource. 

\paragraph{Our result} In this work, we show, strikingly, that the third player is not required. We design a much more direct two-player non-local game with an (improved) exponential trade-off between optimality and dimension: one of the key ideas is the introduction of a simple additional sub-test which can guarantee the coherence of a state-exchange between the two players even in the absence of a ``physical'' third register that forces coherence, like in the games of \cite{leung2013coherent} and \cite{ji2018three}. Our result is the following:

\begin{theorem}{(informal)}
There exists a two-player non-local game on question sets of size $5$ and $6$, and answer sets of size $3$, with the property that:

\begin{itemize}
    \item (completeness) For any $\epsilon > 0$, there exists a strategy of dimension $2^{O(\epsilon^{-1})}$ that is $\epsilon$-close to optimal.
    \item (soundness) Any $\epsilon$-close to optimal strategy has at least $2^{\Omega(\epsilon^{-1/8})}$ dimension.
\end{itemize}
\end{theorem}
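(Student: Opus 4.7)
The plan is to design a two-player non-local game composed of three sub-tests, combined by the verifier picking one at random. The first sub-test is a small rigidity/self-testing game (in the spirit of the Mayers-Yao or CHSH-rigidity tests \cite{mayers2004self, mckague2012robust, coladangelo2017all}) which, on a handful of questions and answers, certifies that the two players share a designated EPR pair and perform canonical Pauli measurements on one side of it. The second sub-test is a coherent state-exchange test in the spirit of \cite{leung2013coherent}: on certain questions the players are asked to locally transform the standard product state on a designated ``exchange register'' into an EPR pair using their shared resource as a catalyst, with the verifier checking the output in complementary bases. The third sub-test is the key novelty: a short ``coherence check'' that forces the state-exchange to be performed as an isometry rather than via a measurement of the exchange register, playing the role filled by the third player in \cite{ji2018three}.

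For completeness, the honest strategy has the players share, in addition to the EPR pairs required by the rigidity sub-test, a van Dam--Hayden embezzling state $|\mu_n\rangle$ of dimension $n = 2^{\Theta(1/\epsilon)}$. When queried on a state-exchange question they embezzle a fresh EPR pair onto the exchange register; when queried on a coherence-check question they run the same isometry and measure in an appropriate basis. The standard van Dam--Hayden embezzlement bound gives fidelity $1 - O(1/\log n) = 1 - O(\epsilon)$, which I expect to translate directly into a score $\epsilon$-close to optimal, yielding the $2^{O(1/\epsilon)}$ upper bound on dimension.

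For soundness, I would chain together three approximation steps, each incurring a square-root (polynomial) loss, in order to reach the $\epsilon^{1/8}$ exponent. First, robust rigidity of the self-testing sub-test promotes $\epsilon$-closeness to optimal into $O(\sqrt{\epsilon})$-closeness (in state and operator norm, after local isometries) to the ideal EPR-plus-residual state and canonical measurements. Second, combining this with near-optimal play on the coherence and state-exchange sub-tests, one should extract an operator-level conclusion: the players' local isometries, applied to the fixed state $|00\rangle$ on the exchange register together with their residual shared state, must produce a state $O(\epsilon^{1/4})$-close to an EPR pair on the exchange register tensored with a catalyst state — i.e.\ they must be approximately embezzling an EPR pair. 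Third, invoke the standard lower bound that $\delta$-approximate embezzlement of an EPR pair requires an ancillary state of Schmidt rank $2^{\Omega(1/\delta)}$; applied with $\delta = O(\epsilon^{1/8})$, this gives the claimed $2^{\Omega(\epsilon^{-1/8})}$ bound.

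The main obstacle will be the second step of the soundness argument: inferring, from only classical statistics on small question/answer sets, that the state-exchange is executed \emph{coherently} without a third, ``purifying'' player. Here the coherence sub-test must be designed so that, combined with the approximate Pauli/anticommutation structure extracted from the first sub-test, any incoherent state-exchange — that is, one which effectively measures the exchange register before rotating — produces statistics inconsistent with those demanded on coherence questions. Quantifying this inconsistency with only a polynomial loss in $\epsilon$, while simultaneously keeping the overall question and answer alphabets of sizes $5$, $6$, and $3$, is the technically delicate heart of the proof.
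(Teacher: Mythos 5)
Your high-level plan is recognisably the same as the paper's: de-quantize the Leung--Toner--Watrous coherent state-exchange game by combining a self-testing sub-test, something that forces an embezzlement-like transformation, and an extra ``coherence check'' that replaces the third player of Ji--Leung--Vidick. You also correctly identify that the coherence check is the technically delicate part, and that the eventual dimension bound comes from a lower bound on approximate embezzlement. So the spirit is right. But the key structural idea that makes the paper's game well-defined and analysable is missing, and in its place there is a step that cannot work as stated.

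The gap is your sub-test (2), the ``state-exchange test.'' You describe it as having the players ``locally transform the standard product state on a designated exchange register into an EPR pair \ldots with the verifier checking the output in complementary bases.'' In a non-local game with classical questions and answers there is no exchange register for the verifier to prepare or to measure; the verifier only sees classical outcomes and must infer everything from statistics. The paper's solution, which your proposal does not contain, is to replace the state-exchange test with a \emph{second self-test}: sub-test (a) is a $3$-outcome CHSH-type game that certifies (up to a local isometry $U$) the maximally entangled qutrit state $\frac{1}{\sqrt3}(\ket{00}+\ket{11}+\ket{22})$ with a computational-basis question for Alice; sub-test (b) is tilted CHSH certifying (up to a different local isometry $U'$) the state $\frac{1}{\sqrt3}(\ket{00}+\sqrt2\ket{11})$ with a computational-basis question for Bob; and sub-test (c) asks Alice her computational-basis question from (a) and Bob his from (b), accepting iff $a=0\Leftrightarrow b=0$. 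This consistency condition forces $U$ and $U'$ to agree on the $\ket{00}$ term, so that the composite local unitary $U'U^{-1}$ must map $\frac{1}{\sqrt2}(\ket{11}+\ket{22})\otimes\ket{aux}$ approximately to $\ket{11}\otimes\ket{aux'}$ --- exactly a local coherent collapse of an EPR pair, i.e.\ embezzlement, without ever needing the verifier to touch quantum registers. This ``two self-tests plus consistency'' trick is what actually de-quantizes the state-exchange, and it is the missing idea in your proposal. A smaller point: your exponent accounting (``three square-root losses'' and then $\delta=O(\epsilon^{1/8})$ fed into a $2^{\Omega(1/\delta)}$ bound) doesn't track; in the paper the $\epsilon^{1/8}$ already comes out of the robustness of the $3$-CHSH self-test, the Euclidean-norm error $O(\epsilon^{1/8})$ becomes a winning-probability gap $O(\epsilon^{1/4})$ in the Leung--Toner--Watrous game, and that game's bound $1-\Omega(1/\log^2 d)$ then gives $d=2^{\Omega(\epsilon^{-1/8})}$.
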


Our game can be thought of as a direct \textit{de-quantization} of the coherent state-exchange game. It is by far the simplest non-local game (in terms of question and answer set size) with such an exponential tradeoff. For a comparison, even with three players, the question and answer sets are of size $12$ and $8$ respectively in \cite{ji2018three}.  

Our game provides a new proof of the non-closure of the set of quantum correlations. However, strikingly, compared to the proofs in \cite{slofstra2019set}, \cite{dykema2017non} and \cite{musat2018non}, our proof is arguably elementary, and does not involve any representation-theoretic machinery. We point out, additionally, that an exponential tradeoff between optimality and dimension does not hold for the game in \cite{dykema2017non}, where a strategy of dimension $1/poly(\epsilon)$ can be $\epsilon$-close to optimal (and we suspect that this is also the case for the game in \cite{musat2018non}). 

Next, we sketch the main ideas in the design of our two-player non-local game.

\subsection{A sketch of our two-player non-local game}
For ease of exposition, we will leave states unnormalized in this introduction.

Our game consists of sub-tests (a), (b) and (c), executed by the verifier with equal probability:

\begin{itemize}
    \item[(a)] A non-local game $G_{\text{3-CHSH}}$ whose unique optimal strategy requires the provers to share the state $\ket{00} + \ket{11} + \ket{22}$. $G_{\text{3-CHSH}}$ is an instance (for $d=3$) of a more general family of non-local games from \cite{col2018generalization}. $G_{\text{3-CHSH}}$ contains a special ``computational basis'' question for Alice which requires her to measure her half of the state in the computational basis.
    \item[(b)] The well-known ``tilted CHSH'' non-local game, which we denote by $G_{\text{tCHSH}}$ \cite{Acin12, bamps2015sum}. This requires, for the appropriate choice of parameters, that the provers share the state $\ket{00}+ \sqrt{2}\ket{11}$. $G_{\text{tCHSH}}$ contains a special ``computational basis'' question for Bob, which requires him to measure in the computational basis.
    \item[(c)] A sub-test in which Alice is asked the ``computational basis'' question from (a), and Bob is asked the ``computational basis'' question from (b). Alice and Bob win if: either they both answer ``0'', or they both answer different from ``0''.
\end{itemize}

The intuition behind the game is the following: Alice and Bob could share the state  $(\ket{00} + \ket{11} + \ket{22})_{\textsf{AB}} \otimes (\ket{00}+ \sqrt{2}\ket{11})_{\textsf{A'B'}}$. This would allow them to win parts (a) and (b) optimally, but they would fail in part (c). The power of part (c) is that Alice is uncertain about whether she is being asked a question from part (a) or (c), and Bob is uncertain about whether he is being asked a question from part (b) or (c). Magically, the condition of part (c) is sufficient to enforce that Alice and Bob cannot keep the two optimal states from part (a) and (b) into two separate registers, but rather they should coherently transform one into the other in order to achieve consistency in answering part (c). This coherent transformation is what requires an exponentially growing amount of entanglement dimension to perform to increasing precision. We refer the reader to section \ref{sec: our non-local game} for a formal description of our game.

\paragraph{Outline} Section \ref{sec: prelim} covers some preliminaries: \ref{sec: notation} introduces some notation, \ref{sec: non-local games} introduces non-local games, \ref{sec: non-local games examples} gives two important examples of non-local games which are used as sub-tests in our non-local game, and \ref{sec: embezzlement} briefly introduces embezzlement. Section \ref{sec: our non-local game} describes our non-local game. Section \ref{sec: completeness} covers completeness: we give a family of strategies that approximates arbitrarily well the optimal value in our non-local game. Section \ref{sec: soundness} covers soundness: we show that any close to optimal strategy requires a lot of entanglement. Section \ref{sec: non-closure} briefly discusses how our non-local game implies the non-closure of the set of quantum correlations.

\section{Preliminaries}
\label{sec: prelim}
\subsection{Notation}
\label{sec: notation}
For some unitary $U$, $\delta >0$, states $\ket{\psi}, \ket{\phi}$, we write $\ket{\psi} \approx_{U, \delta} \ket{\phi}$ if $\|U\ket{\psi} - \ket{\phi} \| \leq \delta$, where $\|\cdot\|$ is the Euclidean norm. We write $\ket{\psi} \approx_{\delta} \ket{\phi}$ if $\|\ket{\psi} - \ket{\phi} \| \leq \delta$.
Let $\sigma^x, \sigma^y, \sigma^z$ be the Pauli matrices.

\subsection{Non-local games}
\label{sec: non-local games}
\begin{definition}[Non-local game]
A non-local game $G$ is a tuple $G = (\mathcal{X}, \mathcal{Y}, \mathcal{A}, \mathcal{B}, D, V)$, where $\mathcal{X}, \mathcal{Y}, \mathcal{A}, \mathcal{B}$ are sets, $D$ is a distribution over $\mathcal{X}\times \mathcal{Y}$, and $V: \mathcal{X} \times \mathcal{Y} \times \mathcal{A} \times \mathcal{B} \rightarrow \mathbb{R}$.  $\mathcal{X}$ and $\mathcal{Y}$ are referred to as question sets, and $\mathcal{A}$ and $\mathcal{B}$ as answer sets. $V$ is referred to as the scoring function.
\end{definition}
We denote by $D(x,y)$ the probability of outcome $x,y$ according to distribution $D$.
Note that we use the term \textit{non-local game} to refer to games in which the scoring function $V$ can take any real value, not just values in $\{0,1\}$ like is sometimes the case in the literature. With this nomenclature, non-local games and Bell inequalities are equivalent.

\begin{definition}[Quantum strategy for a non-local game] A quantum strategy for a non-local game $G = (\mathcal{X}, \mathcal{Y}, \mathcal{A}, \mathcal{B}, D, V)$ is a triple $$\left(\ket{\Psi} \in \mathcal{H}_A\otimes \mathcal{H}_B, \{P_x^a: a\in \mathcal{A}\}_{x \in \mathcal{X}}, \{Q_y^b: b \in \mathcal{B}\}_{y \in \mathcal{Y}}\right),$$ where $\mathcal{H}_{A} , \mathcal{H}_B$ are Hilbert spaces, $\{P_x^a: a\in \mathcal{A}\}_{x \in \mathcal{X}}$ is a set of projective measurements on $\mathcal{H}_{A}$, and $\{Q_y^b: b \in \mathcal{B}\}_{y \in \mathcal{Y}}$ on $\mathcal{H}_B$.
\end{definition}

\begin{definition}[Value of a quantum strategy in a game]
Let $G = (\mathcal{X}, \mathcal{Y}, \mathcal{A}, \mathcal{B}, D, V)$ be a non-local game, and $S = (\ket{\Psi} \in \mathcal{H}_A\otimes \mathcal{H}_B, \{P_x^a: a\in \mathcal{A}\}_{x \in \mathcal{X}}, \{Q_y^b: b \in \mathcal{B}\}_{y \in \mathcal{Y}})$ a quantum strategy for $G$. The value of $S$ in $G$ is $$\omega(S,G) := \sum_{x \in \m X, y \in \m Y} D(x,y) \cdot V(x,y,a,b) \cdot \bra{\Psi}P_x^a \otimes Q_{y}^b \ket{\Psi}$$
\end{definition}
Note that the value $\omega(S,G)$ corresponds to the expected score of strategy $S$ in game $G$, assuming that questions are distributed according to $D$, and that the score is determined by the function $V$.

\begin{definition}[Quantum value of a game]
The quantum value $\omega^*(G)$ of a game $G = (\mathcal{X}, \mathcal{Y}, \mathcal{A}, \mathcal{B}, D, V)$ is defined as follows:
$$\omega^*(G) := \sup_S{\omega(S,G)},$$
where the supremum is taken over all quantum strategies for $G$.
\end{definition}
Since the closure of the set of finite-dimensional quantum correlations contains the set of infinite-dimensional quantum correlations \cite{scholz2008tsirelson}, it does not matter whether the supremum in the definition of $\omega^*$ is taken over finite or infinite-dimensional strategies (i.e. whether $\m H_A$ and $\m H_B$ are finite or infinite-dimensional).

Finally, we introduce some terminology which we will primarily employ in section \ref{sec: non-closure}.
\begin{definition}[Correlation]
Given sets $\m{X}$,$\m{Y}$,$\m{A}$,$\m{B}$, a (bipartite) \textit{correlation} is a collection $\{p(a,b|x,y): a\in \m{A}, b \in \m{B}\}_{(x,y)\in \m{X}\times \m{Y}}$, where each $p(\cdot,\cdot|x,y)$ is a probability distribution over $\m A\times \m B$.
\end{definition}

\begin{definition}[Correlation induced by a quantum strategy]
Given a quantum strategy $S = (\ket{\Psi} \in \mathcal{H}_A\otimes \mathcal{H}_B, \{P_x^a: a\in \mathcal{A}\}_{x \in \mathcal{X}}, \{Q_y^b: b \in \mathcal{B}\}_{y \in \mathcal{Y}})$ for a non-local game $G = (\mathcal{X}, \mathcal{Y}, \mathcal{A}, \mathcal{B}, D, V)$, the correlation induced by $S$ is $\{p(a,b|x,y): a\in \m{A}, b \in \m{B}\}_{(x,y)\in \m{X}\times \m{Y}}$ where, for all $a,b,x,y,$ $$p(a,b|x,y) = \bra{\Psi}P_x^a \otimes Q_{y}^b \ket{\Psi}.$$
\end{definition}

\begin{definition}[Value of a correlation in a game]
Let $G = (\mathcal{X}, \mathcal{Y}, \mathcal{A}, \mathcal{B}, D, V)$ be a non-local game, and $p =\{p(a,b|x,y): a\in \m{A}, b \in \m{B}\}_{(x,y)\in \m{X}\times \m{Y}}$ a correlation. The value $\omega(p, G)$ of $p$ in $G$ is defined as 
$$ \omega(p, G) := \sum_{x \in \m X, y \in \m Y} D(x,y) \cdot V(x,y,a,b) \cdot p(a,b|x,y).$$
\end{definition}

Clearly, if $p$ is the correlation induced by a quantum strategy $S$ for game $G$, then $\omega(p,G) = \omega(S,G)$.

We note that quantum strategies can also be considered separately from non-local games (i.e. without specifying any distribution $D$ or scoring function $V$). We then denote by $\mathcal{C}_q$ (resp. $\m C_{qs}$) the set of correlations induced by a finite-dimensional (resp. possibly infinite-dimensional) quantum strategy. We denote by $\m C_{qa}$ the closure of $\m C_q$, which by \cite{scholz2008tsirelson} is known to also be the closure of $\m C_{qs}$.

\subsection{Useful examples of non-local games}
In this section, we describe two families of non-local games which we will employ as sub-tests in our non-local game.
\label{sec: non-local games examples}
\paragraph{Tilted CHSH}

We introduce the tilted CHSH inequality \cite{Acin12}, which is a building block for the non-local game in this work. First, we recall the CHSH inequality. It states that for binary observables $A_0, A_1$ on Hilbert space $\mathcal{H}_A$ and binary observables $B_0, B_1$ on Hilbert space $\mathcal{H}_B$ together with a product state $\ket\phi = \ket{\phi_A} \otimes \ket{\phi_B}$, we have
\begin{equation}
	\bra{\phi} A_0B_0 + A_0B_1 +A_1B_0 - A_1B_1 \ket{\phi} \leq 2,
\end{equation}
where the maximum is achieved (for example setting all observables to identity). However, if instead of the product state $\ket\phi$ we allow an entangled state $\ket\psi$, then the right-hand side of the inequality increases to $2\sqrt2$. This maximum requires a maximally entangled pair of qubits to achieve. In this work, we would like to use an inequality that requires a non-maximally entangled state to achieve the maximum; this is the tilted CHSH inequality. Given a real parameter 
$\beta \in [0,2)$, for a product state $\ket{\phi} = \ket{\phi_A} \otimes \ket{\phi_B}$,
\begin{equation}
\bra{\phi} \beta A_0 + A_0B_0 + A_0B_1 +A_1B_0 - A_1B_1 \ket{\phi} \leq 2+\beta.
\end{equation}
For entangled $\ket\psi$, we have instead that
\begin{equation}\label{eq:tiltedchsh}
	\bra{\psi} \beta A_0 + A_0B_0 + A_0B_1 +A_1B_0 - A_1B_1 \ket{\psi}\leq \sqrt{8+2\beta^2}.
\end{equation}
The maximum in the tilted CHSH inequality is attained by the following strategy:
\begin{definition}[Ideal strategy for tilted CHSH]
\label{def: ideal tilted chsh}
Given parameter $\beta \in [0,2)$, let $\theta \in (0, \frac{\pi}{4}]$ be such that $\sin 2\theta = \sqrt{\frac{4-\beta^2}{4+\beta^2}}$, $\mu = \arctan \sin 2\theta$, and $\alpha = \tan \theta$. Define the \emph{$\alpha$-tilted Pauli operators} as
\begin{equation}
 	\saz := \cos \mu \sigma^z + \sin \mu \sigma^x\text{, and } \sax := \cos \mu \sigma^z - \sin \mu \sigma^x.
 \end{equation} 
The ideal strategy for tilted CHSH with parameter $\beta$ (i.e.\ achieving maximal violation of \eqref{eq:tiltedchsh}) consists of the joint state $\ket{\Psi} = \cos \theta (\ket{00} + \alpha \ket{11})$ and observables $A_0, A_1$ and $B_0, B_1$ with $A_0 = \sigma^z$, $A_1 = \sigma^x$, $B_0 = \sigma_\alpha^z$ and $B_1 = \sigma_\alpha^x$.

\end{definition}

$\beta$ and $\alpha$ are related by an invertible function, and $\alpha$ is typically the parameter of interest, so we choose to denote by $\textnormal{tCHSH}(\alpha)$ the tilted CHSH game whose ideal state is $\ket{\Psi} = \cos \theta (\ket{00} + \alpha \ket{11})$.

We can equivalently formulate the tilted CHSH inequality as a non-local game, as follows:
\begin{definition}[Tilted CHSH as a non-local game]
\label{def: tchsh as game}
For $\alpha \in (0,1]$, the tilted CHSH game $G_{\textnormal{tCHSH}(\alpha)}$ is 
\begin{equation}
G_{\textnormal{tCHSH}(\alpha)} = (\mathcal{X}, \mathcal{Y}, \mathcal{A}, \mathcal{B}, D, V_{\textnormal{tCHSH}(\alpha)}),
\end{equation}
where $\m X, \m Y, \m A, \m B = \{0,1\}$, $D$ is uniform on $\m X \times \m Y$, and $V_{\textnormal{tCHSH}(\alpha)} = (-1)^{a \oplus b - xy} + \delta_{\{x=y=0\}} \cdot \beta \cdot (-1)^a$, where $\beta$ and $\alpha$ are related as in Definition \ref{def: ideal tilted chsh}. 
\end{definition}

\begin{prop}[Quantum value of the tilted CHSH game]
For $\alpha \in (0,1]$, the value of $G_{\textnormal{tCHSH}(\alpha)}$ is $\omega^*_{\textnormal{tCHSH}(\alpha)} := \frac14\cdot\sqrt{8+2\beta^2}$, where $\beta$ and $\alpha$ are related as in Definition \ref{def: ideal tilted chsh}. 
\end{prop}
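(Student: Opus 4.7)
The plan is to rewrite the value of any quantum strategy $S = (\ket{\Psi}, \{P_x^a\}, \{Q_y^b\})$ in $G_{\textnormal{tCHSH}(\alpha)}$ as $\tfrac14$ times the standard tilted CHSH Bell expression, so that the claim follows directly from the already-stated inequality \eqref{eq:tiltedchsh} together with the explicit saturating strategy of Definition \ref{def: ideal tilted chsh}.

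First, I would introduce the binary observables $A_x = P_x^0 - P_x^1$ on Alice's side and $B_y = Q_y^0 - Q_y^1$ on Bob's, so that $\bra{\Psi}A_x\otimes B_y\ket{\Psi} = \sum_{a,b}(-1)^{a\oplus b}p(a,b|x,y)$ and $\bra{\Psi}A_0\otimes \Id\ket{\Psi} = \sum_{a,b}(-1)^a p(a,b|0,0)$. Expanding the scoring function $V_{\textnormal{tCHSH}(\alpha)} = (-1)^{a\oplus b - xy} + \delta_{\{x=y=0\}}\beta(-1)^a$ against the uniform distribution on $\{0,1\}^2$, and noting that $(-1)^{-xy}=(-1)^{xy}$ is $-1$ exactly when $x=y=1$, yields
\begin{equation}
\omega(S, G_{\textnormal{tCHSH}(\alpha)}) \;=\; \tfrac14\,\bra{\Psi}\bigl(\beta A_0 + A_0B_0 + A_0B_1 + A_1B_0 - A_1B_1\bigr)\ket{\Psi}.
\end{equation}
The upper bound $\omega^*(G_{\textnormal{tCHSH}(\alpha)}) \leq \tfrac14\sqrt{8+2\beta^2}$ is then precisely \eqref{eq:tiltedchsh}, and the matching lower bound is realised by the ideal strategy of Definition \ref{def: ideal tilted chsh}: a direct Pauli computation on $\ket{\Psi}=\cos\theta\ket{00}+\sin\theta\ket{11}$ (using $\alpha=\tan\theta$) gives the nonvanishing correlators $\bra{\Psi}\sigma^z\otimes\Id\ket{\Psi}=\cos 2\theta$, $\bra{\Psi}\sigma^z\otimes\sigma^z\ket{\Psi}=1$, and $\bra{\Psi}\sigma^x\otimes\sigma^x\ket{\Psi}=\sin 2\theta$, while $\bra{\Psi}\sigma^z\otimes\sigma^x\ket{\Psi}=\bra{\Psi}\sigma^x\otimes\sigma^z\ket{\Psi}=0$. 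Substituting $B_0 = \cos\mu\,\sigma^z+\sin\mu\,\sigma^x$ and $B_1 = \cos\mu\,\sigma^z-\sin\mu\,\sigma^x$, the bracketed quantity simplifies to $\beta\cos 2\theta + 2\cos\mu\,(1+\sin^2 2\theta)$, and plugging in $\tan\mu=\sin 2\theta$ together with $\sin^2 2\theta=(4-\beta^2)/(4+\beta^2)$ produces exactly $\sqrt{8+2\beta^2}$.

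No step presents a genuine obstacle: the upper bound is invoked as a citation to \cite{Acin12, bamps2015sum}, and the lower bound is an explicit algebraic check. The whole argument is essentially a translation exercise between the Bell-inequality formulation of tilted CHSH and its formulation as the non-local game $G_{\textnormal{tCHSH}(\alpha)}$ of Definition \ref{def: tchsh as game}.
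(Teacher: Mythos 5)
Your proposal is correct and follows the same approach as the paper's proof: recognizing that $\omega(S, G_{\textnormal{tCHSH}(\alpha)})$ equals $\tfrac14$ times the tilted CHSH Bell expression, then invoking the upper bound \eqref{eq:tiltedchsh} and the saturating ideal strategy of Definition \ref{def: ideal tilted chsh}. The paper's version is terser (it merely observes the correspondence and leaves the observable/answer identification implicit), whereas you spell out the $(-1)^{xy}$ bookkeeping and verify the lower bound by direct computation, but the argument is the same.
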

\begin{proof}
Notice that for any strategy $S$, the value $\omega(S, G_\textnormal{tCHSH}(\alpha))$ takes precisely the form of the LHS of \eqref{eq:tiltedchsh} (upon associating, for each observable in \eqref{eq:tiltedchsh}, the projection onto the $+1$-eigenspace with answer $0$ and the projection onto the $-1$-eigenspace with answer $1$, and up to a factor of $\frac14$ from sampling the questions uniformly).
\end{proof}
In other words, the LHS of the tilted CHSH inequality and the value of the tilted CHSH game are equivalent reformulations of one another. The following theorem asserts a robust self-testing result for tilted CHSH, i.e. that any strategy that attains a value close to the quantum value of the game, must be close to the ideal strategy of Definition \ref{def: ideal tilted chsh} (in the following statement we only write down the conditions that we make use of later).

\begin{theorem}[Self-testing with tilted CHSH (\cite{yang2013robust, bamps2015sum})]
\label{thm: tilted chsh self testing}
Let $\alpha \in (0,1]$. Maximal value in $G_{\textnormal{tCHSH}(\alpha)}$ self-tests the ideal strategy of Definition \ref{def: ideal tilted chsh} with robustness $O(\sqrt{\epsilon})$, i.e. for any strategy $S = (\ket{\Psi} \in \m H_A \otimes \m H_B, \{P_x^a\}, \{Q_y^b\}) $ with value $\omega(S, G_{\textnormal{tCHSH}(\alpha)}) >\omega^*_{\textnormal{tCHSH}(\alpha)}- \epsilon$ there exists a local unitary $U$ and an auxiliary state $\ket{aux}$ such that:
    \begin{itemize}
        \item $\ket{\Psi} \approx_{U, O(\epsilon^{1/2})} \frac{1}{\sqrt{1+\alpha^2}} (\ket{00}+\alpha \ket{11}) \otimes \ket{aux}$
        \item $P_{0}^0\ket{\Psi} \approx_{U, O(\epsilon^{1/2})} \frac{1}{\sqrt{1+\alpha^2}}\ket{00} \otimes \ket{aux}$
    \end{itemize}
\end{theorem}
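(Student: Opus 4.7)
The plan is to follow the standard sum-of-squares (SOS) route used by Bamps and Pironio. First I would recall the SOS decomposition of the tilted CHSH Bell operator: the operator
\[
\mathcal{B}_\alpha := \beta A_0 + A_0B_0 + A_0B_1 + A_1B_0 - A_1B_1
\]
admits a decomposition of the form $\sqrt{8+2\beta^2}\,\Id - \mathcal{B}_\alpha = \sum_i P_i^\dagger P_i$, where the $P_i$ are explicit linear combinations of the operators $A_0,A_1,B_0,B_1$ (and their products) with coefficients depending on $\alpha$ (equivalently, $\beta$). Near-optimality $\omega(S,G_{\textnormal{tCHSH}(\alpha)})>\omega^*_{\textnormal{tCHSH}(\alpha)}-\epsilon$ is equivalent to $\bra{\Psi}\mathcal{B}_\alpha\ket{\Psi}>\sqrt{8+2\beta^2}-4\epsilon$, so each term satisfies $\|P_i\ket{\Psi}\|\le O(\sqrt{\epsilon})$. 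Reading off the Bamps--Pironio SOS, these give a set of approximate operator identities on $\ket{\Psi}$, most importantly one that anti-commutation-like identities for $\alpha$-tilted versions $Z_B:=\cos\mu\,B_0+\sin\mu\,B_1$ and $X_B:=\cos\mu\,B_0-\sin\mu\,B_1$ of Bob's operators hold, and that $A_0\ket{\Psi}$ and $Z_B\ket{\Psi}$ (resp.\ $A_1\ket{\Psi}$ and $X_B\ket{\Psi}$) agree up to an $\alpha$-dependent rescaling and an $O(\sqrt{\epsilon})$ error.

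Next I would run the standard swap-isometry argument. Define on Alice's side the regularized $\pm 1$ observables $\widetilde Z_A=A_0$ and $\widetilde X_A=A_1$, and on Bob's side the tilted analogues $Z_B,X_B$ as above (which by the SOS are approximately unitary on $\ket{\Psi}$). Introduce two ancilla qubits $\mathsf{A}'\mathsf{B}'$ in state $\ket{+}\ket{+}$ and let $V$ be the unitary built from controlled-$\widetilde Z_A, \widetilde X_A$ on Alice's side and controlled-$Z_B,X_B$ on Bob's side, conjugated by Hadamards, exactly as in the usual Mayers--Yao/Bamps--Pironio swap. Using the approximate commutation/anti-commutation and ``Pauli-matching'' relations derived from the SOS, a direct computation (propagating errors linearly) shows that $V$ maps $\ket{\Psi}\otimes\ket{+}\ket{+}$ to
\[
\tfrac{1}{\sqrt{1+\alpha^2}}(\ket{00}+\alpha\ket{11})_{\mathsf A'\mathsf B'}\otimes\ket{\mathrm{junk}}_{\mathsf{AB}}
\]
with error $O(\sqrt{\epsilon})$; calling $\ket{\mathrm{junk}}$ the auxiliary state $\ket{aux}$ and absorbing the ancillas into Alice's Hilbert space (via local unitaries), this gives the first bullet.

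For the second bullet, I would apply the same isometry $V$ to the sub-normalized vector $P_0^0\ket{\Psi}$. Since $V$ acts as the identity on $\mathsf{AB}$ except for the swap gadgets driven by $A_0,A_1$, and since $P_0^0$ commutes with the ancilla-initialization and with Bob's operators used in $V$, I would use the approximate relation $A_0 P_0^0\ket\Psi\approx_{O(\sqrt\epsilon)} P_0^0\ket\Psi$ (which also comes from the SOS, since $P_0^0=\tfrac12(\Id+A_0)$) to conclude that the output on the ancilla is $\ket{00}$ and on $\mathsf{AB}$ is a sub-normalized copy of $\ket{aux}$, again with $O(\sqrt\epsilon)$ error.

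The main obstacle, and the part I would be most careful about, is the error tracking through the swap isometry: each gadget introduces several applications of the approximate Pauli operators, and since the relations hold only on $\ket\Psi$ (not as operator identities), one must commute them past each other using the approximate algebraic relations without the errors blowing up. The Bamps--Pironio SOS is tight enough that this tracking yields $O(\sqrt\epsilon)$ overall, matching the stated robustness, and the second bullet follows essentially for free once the first is established.
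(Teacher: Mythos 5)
The paper does not prove this theorem itself; it is imported directly from the cited references \cite{yang2013robust, bamps2015sum}, and your plan is a faithful reconstruction of the Bamps--Pironio SOS-plus-swap-isometry argument, which is the standard route and the one the paper implicitly relies on. The first-bullet argument, including the error tracking to $O(\sqrt\epsilon)$ for fixed $\alpha$, is sound and matches those references.

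Your sketch of the second bullet, however, needs repair. The identity $A_0 P_0^0 \ket{\Psi} = P_0^0 \ket{\Psi}$ is \emph{exact}, not an SOS-derived approximation: $P_0^0$ is the $+1$ spectral projector of $A_0$, so $A_0 P_0^0 = P_0^0$ as operators. More substantively, $P_0^0$ does not commute with $A_1$, and $A_1$ appears in the Alice-side swap gadget, so the claim that one can ``commute $P_0^0$ past'' the pieces of $V$ does not go through as stated. The clean argument is to note that the swap isometry $V$ also self-tests operator action: alongside $V\ket{\Psi} \approx_{O(\sqrt\epsilon)} \tfrac{1}{\sqrt{1+\alpha^2}}(\ket{00}+\alpha\ket{11}) \otimes \ket{aux}$, the same SOS relations give $V A_0 \ket{\Psi} \approx_{O(\sqrt\epsilon)} (\sigma^z \otimes I)\,\tfrac{1}{\sqrt{1+\alpha^2}}(\ket{00}+\alpha\ket{11}) \otimes \ket{aux} = \tfrac{1}{\sqrt{1+\alpha^2}}(\ket{00}-\alpha\ket{11}) \otimes \ket{aux}$. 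Writing $P_0^0 = \tfrac12(\Id+A_0)$ and using linearity of $V$ on these two vectors then yields the second bullet directly with the same $O(\sqrt\epsilon)$ error. This is a local fix and does not affect the soundness of your overall plan.
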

The last condition means that the first player's measurement on question ``$0$'' is equivalent (up to a change of basis) to a computational basis measurement.

For clarity of notation and exposition in later sections, it is convenient for us to define the game $G_{\sim\textnormal{tCHSH}(\alpha)}$, for $\alpha \in (0,1]$. This is an equivalent version of $G_{\textnormal{tCHSH}(\alpha)}$ with the only difference that the scoring function is $V_{\sim\textnormal{tCHSH}(\alpha)} := (-1)^{a \oplus b - xy} - \delta_{\{x=y=0\}} \cdot \beta \cdot (-1)^{a}$ (notice the minus sign). It is easy to see that this game is equivalent to the original tilted CHSH up to a flip of the answer labels (so in particular $\omega^*_{\textnormal{tCHSH}(\alpha)} = \omega^*_{\sim\textnormal{tCHSH}(\alpha)})$. The corresponding version of Theorem \ref{thm: tilted chsh self testing} for $G_{\sim\textnormal{tCHSH}(\alpha)}$ is as follows:
\begin{theorem}
\label{thm: tilted chsh self testing sim}
Let $\alpha \in (0,1]$. Maximal value in $G_{\sim\textnormal{tCHSH}(\alpha)}$ self-tests the ideal strategy of Definition \ref{def: ideal tilted chsh} with robustness $O(\sqrt{\epsilon})$, i.e. for any strategy $S = (\ket{\Psi} \in \m H_A \otimes \m H_B, \{P_x^a\}, \{Q_y^b\}) $ with value $\omega(S, G_{\sim\textnormal{tCHSH}(\alpha)}) >\omega^*_{\sim\textnormal{tCHSH}(\alpha)}- \epsilon$ there exists a local unitary $U$ and an auxiliary state $\ket{aux}$ such that:
    \begin{itemize}
        \item $\ket{\Psi} \approx_{U, O(\epsilon^{1/2})} \frac{1}{\sqrt{1+\alpha^2}} (\alpha\ket{00}+ \ket{11}) \otimes \ket{aux}$
        \item $P_{0}^0\ket{\Psi} \approx_{U, O(\epsilon^{1/2})} \frac{\alpha}{\sqrt{1+\alpha^2}}\ket{00} \otimes \ket{aux}$
    \end{itemize}
\end{theorem}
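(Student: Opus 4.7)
My plan is to deduce Theorem \ref{thm: tilted chsh self testing sim} as a direct corollary of Theorem \ref{thm: tilted chsh self testing}, via the answer-label flip equivalence between the two games remarked just before the statement. First I will verify at the level of scoring functions that
\[
V_{\sim\textnormal{tCHSH}(\alpha)}(x,y,a,b) = V_{\textnormal{tCHSH}(\alpha)}(x,y,a\oplus 1, b\oplus 1),
\]
which is a one-line check: the joint bit flip leaves $(-1)^{a\oplus b - xy}$ invariant while sending $\beta(-1)^a$ to $-\beta(-1)^a$. Consequently, given any strategy $S = (\ket{\Psi},\{P_x^a\},\{Q_y^b\})$ for $G_{\sim\textnormal{tCHSH}(\alpha)}$ with value above $\omega^*_{\sim\textnormal{tCHSH}(\alpha)} - \epsilon$, the relabelled strategy $S' := (\ket{\Psi},\{P_x^{a\oplus 1}\},\{Q_y^{b\oplus 1}\})$ attains exactly the same value in $G_{\textnormal{tCHSH}(\alpha)}$ --- which also establishes the claimed equality $\omega^*_{\textnormal{tCHSH}(\alpha)} = \omega^*_{\sim\textnormal{tCHSH}(\alpha)}$.

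Next I will apply Theorem \ref{thm: tilted chsh self testing} to $S'$ to obtain a local unitary $U$ and auxiliary state $\ket{aux}$ satisfying
\[
\ket{\Psi} \approx_{U, O(\epsilon^{1/2})} \frac{1}{\sqrt{1+\alpha^2}}(\ket{00}+\alpha\ket{11})\otimes \ket{aux},
\]
together with $P_0^1 \ket{\Psi} \approx_{U, O(\epsilon^{1/2})} \frac{1}{\sqrt{1+\alpha^2}}\ket{00}\otimes \ket{aux}$ --- note that the ``$0$'' outcome of $S'$ at $x=0$ is the ``$1$'' outcome of the original $S$. To convert these into the sim conclusions, I will set $\tilde U := (\sigma^x\otimes \sigma^x)\,U$, where the two $\sigma^x$'s act on the pair of qubit registers that $U$ extracts from $\m H_A$ and $\m H_B$; since $U$ factorises as $U_A\otimes U_B$, so does $\tilde U$, which is therefore still a local unitary. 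Applying $\sigma^x\otimes \sigma^x$ to $(\ket{00}+\alpha\ket{11})/\sqrt{1+\alpha^2}$ returns $(\alpha\ket{00}+\ket{11})/\sqrt{1+\alpha^2}$, yielding the first bullet. For the second bullet I will use the projective identity $P_0^0 = I - P_0^1$ and the triangle inequality across the two approximations above to obtain $U P_0^0 \ket{\Psi} \approx \frac{\alpha}{\sqrt{1+\alpha^2}}\ket{11}\otimes \ket{aux}$ to within $O(\epsilon^{1/2})$; the final $\sigma^x\otimes \sigma^x$ then swaps $\ket{11}\mapsto \ket{00}$ as required.

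The main obstacle is purely bookkeeping: keeping the sign and label conventions straight when identifying the two scoring functions, noting that unitary conjugation preserves norms so the $O(\epsilon^{1/2})$ bounds survive untouched, and absorbing at most a constant factor from the two triangle-inequality steps. No new self-testing ingredient is required, so I do not anticipate any substantive difficulty --- the sim theorem is essentially a symmetry corollary of Theorem \ref{thm: tilted chsh self testing}.
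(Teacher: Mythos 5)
Your proposal is correct and takes essentially the same approach the paper has in mind: the paper dismisses the proof as the observation that $G_{\sim\textnormal{tCHSH}(\alpha)}$ is obtained from $G_{\textnormal{tCHSH}(\alpha)}$ by a joint flip of the answer labels, which is precisely the relabelling $S\mapsto S'$ you use to invoke Theorem \ref{thm: tilted chsh self testing} before undoing the flip with $\sigma^x\otimes\sigma^x$ and $P_0^0=I-P_0^1$. Your write-up simply makes that remark explicit and rigorous.
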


To provide a little more insight, the reason why we introduce this equivalent version of tilted CHSH is the following. A building block in our non-local game will be a game that self-tests the state $\frac{1}{\sqrt{5}}\left(\ket{00} + \sqrt{2}\ket{11}\right)$, i.e. the ideal state for tilted CHSH with $\alpha = \sqrt{2}$. However, the parameter $\alpha$ in the tilted CHSH game is required to be in $(0,1]$. Of course, the point is that the tilted EPR pair with $\alpha = \sqrt{2}$ is equivalent, up to a $\sigma_x$ on both tensor factors, to the tilted EPR pair with $\alpha = \frac{1}{\sqrt{2}}$. The alternative version of tilted CHSH from Theorem \ref{thm: tilted chsh self testing sim} ensures that the answer labels are naturally consistent with the self-tested state being $\frac{1}{\sqrt{5}}\left(\ket{00} + \sqrt{2}\ket{11}\right)$ rather than $\frac{1}{\sqrt{5}}\left(\ket{11} + \sqrt{2}\ket{00}\right)$.

This is a notational choice that simplifies the description of our non-local game in Section \ref{sec: our non-local game}.


\paragraph{Generalization of CHSH self-testing states of local dimension $d$} There is a family of non-local games, parametrized by $d\geq 2 \in \mathbb{N}$, which generalizes the CHSH game \cite{col2018generalization}. The games in this family have the property that, for the game with parameter $d$, maximal score in the game self-tests the maximally entangled state of local dimension $d$. Each of the games in this family is a 2-player game in which question sets are of size $2 + \mathds{1}_{d>2}$ and $2+ 2\cdot \mathds{1}_{d>2}$, and answer sets are of size $d$. When $d=2$, the game coincides with the usual CHSH game. We denote by $G_{d\textnormal{-CHSH}}$ the game in the family with parameter $d$. We do not describe this family of games in full detail here (for details we refer to \cite{col2018generalization}). We will just recall the self-testing properties of the game that we need in the following theorem, and describe the ideal strategy for the case of $d=3$ (we will use $G_{3\textnormal{-CHSH}}$ later as a sub-test in our non-local game).

\begin{theorem}[\cite{col2018generalization}]
\label{thm: self-testing gen chsh}
There exists a family of non-local games $\{G_{d\textnormal{-CHSH}}\}_{d \geq 2 \in \mathbb{N}}$ with the following properties:
\begin{itemize}
    \item Question sets are:
    \begin{itemize}
        \item $\mathcal{X} = \mathcal{Y} = \{0,1\}$, for $d =2$
        \item $\mathcal{X} = \{0,1,2\}, \mathcal{Y} = \{0,1,2,3\}$, for $d >2$.
        
    \end{itemize} 
    Answer sets are $\mathcal{A} = \mathcal{B} = \{0,1,..,d-1\}$. For all $d$, the distribution over questions is uniform. Denote by $V_{d\textnormal{-CHSH}}$ the scoring function for $G_{d\textnormal{-CHSH}}$.
    
    \item (Self-testing) Let $\omega^*_{d\textnormal{-CHSH}}$ be the value of the game with parameter $d$. There exists a constant $C>0$ such that the following holds. Any strategy $S = (\ket{\Psi}, \{P_x^a\}, \{Q_y^b\}) $ with value $\omega(S, G_{d\textnormal{-CHSH}}) \geq \omega^*_{d\textnormal{-CHSH}}- \epsilon$, for some  $0< \epsilon < \frac{C}{d^3}$, is such that there exists a local unitary $U$ and an auxiliary state $\ket{aux}$ such that: 
    \begin{itemize}
        \item $\ket{\Psi} \approx_{U, O(d^6\epsilon^{1/8})} \frac{1}{\sqrt{d}}\sum_{i=0}^{d-1} \ket{ii} \otimes \ket{aux}$
        \item $P_{0}^i\ket{\Psi} \approx_{U, O(d^6\epsilon^{1/8})} \frac{1}{\sqrt{d}}\ket{ii} \otimes \ket{aux}$.
    \end{itemize}
\end{itemize}
\end{theorem}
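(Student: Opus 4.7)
The plan is to construct $G_{d\textnormal{-CHSH}}$ so that its structure decomposes into $d-1$ overlapping CHSH-type sub-tests, each of which certifies one two-dimensional ``edge'' of the target maximally entangled state, and then to glue the resulting rigidity statements together into a global self-test.

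First I would design the game. For $d>2$ I would take Alice's three questions and Bob's four questions to correspond to measurements in (i) the computational basis of $\mathbb{C}^d$ on both sides (so that on maximal score they are consistent), and (ii) bases obtained by rotating each adjacent pair $\{\ket{i}, \ket{i+1}\}$ into CHSH-optimal angles on that two-dimensional subspace, with the other levels left fixed. The scoring function $V_{d\textnormal{-CHSH}}$ would be a weighted sum of $d-1$ CHSH-like predicates indexed by the adjacent pairs $(i,i+1)$, weighted so that the sum attains its algebraic maximum only when each sub-test is individually saturated on the corresponding 2-dim block of $\frac{1}{\sqrt d}\sum_{i} \ket{ii}$. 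Thus by construction $\omega^*_{d\textnormal{-CHSH}}$ is attained by the claimed ideal strategy.

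Second, I would establish edge-by-edge rigidity. If the total score is within $\epsilon$ of optimal, each of the $O(d)$ CHSH sub-tests is within $O(d\epsilon)$ of its own optimum, so the standard Yang--Navascu\'es--Bamps--Pironio self-test for CHSH, applied to the state conditioned on Alice's and Bob's computational-basis outcomes lying in $\{i,i+1\}$, yields a local isometry $U_i$ and auxiliary state $\ket{\textnormal{aux}_i}$ for which the restricted state is $O(\sqrt{d\epsilon})$-close to the EPR pair on the $\{i,i+1\}$-subspace, and the computational-basis projectors $P_0^i,P_0^{i+1}$ restricted to that subspace act, up to $U_i$, as the corresponding rank-one projectors of that EPR pair. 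In particular the measurement associated to Alice's question ``$0$'' is (approximately) a projective measurement in a genuine basis that refines into these pairs.

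Third comes the step I expect to be the main obstacle: gluing the local isometries $U_i$ into a single global isometry $U$ witnessing closeness to the full $\frac{1}{\sqrt d}\sum_i \ket{ii}\otimes \ket{\textnormal{aux}}$. The key relation is that the projector $P_0^i$ appears in both the $(i-1,i)$ and $(i,i+1)$ edge tests, so the two local pictures must agree on the rank-one component carrying $\ket{i}$; this consistency is what lets one define a global swap isometry (essentially an SWAP built out of the CHSH-optimal observables and the $P_0^i$s) that simultaneously rotates every edge to its ideal form. Propagating the $O(\sqrt{d\epsilon})$ per-edge error through the $O(d)$ overlapping edges, and then through the standard sqrt-of-norm step required to pass from approximate projector identities to approximate state identities, accumulates polynomially in $d$ and costs additional square roots, giving the quoted $O(d^6\epsilon^{1/8})$ robustness; the $\epsilon<C/d^3$ restriction ensures the per-edge bound stays inside the regime where CHSH rigidity is meaningful. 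The technically demanding part is controlling this chain of overlap arguments so that a \emph{single} auxiliary state $\ket{\textnormal{aux}}$ absorbs the residual degrees of freedom on both sides simultaneously, rather than a different $\ket{\textnormal{aux}_i}$ per edge; this is what forces the careful bookkeeping behind the large power of $d$.
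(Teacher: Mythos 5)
This theorem is cited from \cite{col2018generalization} and not proved in the present paper, so there is no internal proof to compare against; I can only weigh your sketch against the construction in that reference. At the level of strategy your picture is correct: the game is built from CHSH sub-tests on adjacent two-dimensional blocks $\{\ket{i},\ket{i+1}\}$, per-block CHSH rigidity is applied, and a SWAP-type isometry built from the CHSH observables and computational-basis projectors glues the blocks, with the projector $P_0^i$ shared between consecutive tests providing the consistency. The way the error accumulates polynomially in $d$ and picks up square roots is also of the kind you describe.

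The genuine gap is in your description of the game itself, and it matters because it is exactly what makes the theorem useful here. You write that the non-computational questions correspond to ``rotating each adjacent pair $\{\ket{i},\ket{i+1}\}$ \dots\ with the other levels left fixed.'' There are $d-1$ adjacent pairs, but Alice has only two non-computational questions and Bob has four questions total; if each pair were rotated by a dedicated question that leaves the other levels fixed, the question sets would grow linearly in $d$, contradicting the theorem (and ruining the application in this paper, which needs question sets of fixed size). The construction in \cite{col2018generalization} interleaves: Alice's question $1$ acts as a direct sum of $\sigma^x$ over all even-indexed pairs $\{2k,2k+1\}$, question $2$ as a direct sum over all odd-indexed pairs $\{2k+1,2k+2\}$, and Bob's four questions supply the two tilted observables for the even blocks and the two for the odd blocks. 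The ``other levels'' are not left fixed --- they are rotated too, on their own block --- and it is precisely because $P_0^i$ belongs simultaneously to an even-pair block and an odd-pair block that the gluing chain closes. You should make the interleaving explicit; without it the game you describe is a different (and growing) family.
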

Again, the last condition means that the first player's measurement on question ``$0$'' is equivalent (up to a change of basis) to a computational basis measurement.

Next, we describe the ideal strategy for $G_{3\textnormal{-CHSH}}$. First, we fix some notation.

We define an isometry $V: (\C^2)_{A} \to (\C^3)_{\tilde{A}}$ as follows:
\begin{equation}
	V \ket 0 = \ket {1},\,\,\,
    V \ket 1 = \ket {2}
\end{equation}

For an operator $O$ on $\mathbb{C}^{2}$, we write $V(O)$ to refer to the pushforward $VO V^\dagger$ of $O$ along $V$. For example, $V (\sigma^z) = \ket{1}\bra{1} - \ket{2}\bra{2}$. If $O$ has $+1,0,-1$ eigenvalues, we write $O^+$ for the projection onto the $+1$ eigenspace and $O^-$ for the projection onto the $-1$ eigenspace. One can check that with this notation $O = O^+ - O^-$.
We use the notation $\bigoplus A_i$ to denote the direct sum of observables $A_i$. If $\m H_A \approx \C^3$, we still write $\sigma^z_A$ to mean $\sigma^z_A = \proj 0_A - \proj 1_A$. On the other hand, in accordance with the notation above, we write $V(\sigma^z)_A$ to mean  $V(\sigma^z)_A = \proj 1 - \proj 2$. We adopt an analogous notation for all other Paulis and tilted Paulis, and projections onto their eigenspaces. (We will make use of the $\alpha$-tilted Paulis $\saz,\sax$ from Definition \ref{def: ideal tilted chsh}).

\begin{definition}[Ideal strategy for $G_{3\textnormal{-CHSH}}$ \cite{col2018generalization}]
\label{def: ideal 3-chsh}
The ideal strategy for $G_{3\textnormal{-CHSH}}$ is $(\ket{\Psi}, \{P_x^a\}, \{Q_y^b\})$, where $\ket{\Psi} = \frac{1}{\sqrt{3}} (\ket{00} + \ket{11} + \ket{22})$, and the ideal measurements are described in Tables \ref{tab: 3chsh ideal alice} and \ref{tab: 3chsh ideal bob}.
\begin{table}[H]
    \caption{Alice's ideal measurements for $G_{3\textnormal{-CHSH}}$. The entry in cell $x,a$ is the projector $P_{x}^a$.}
    \label{tab: 3chsh ideal alice}
    \centering
    \begin{tabular}{| c || Sc | Sc | Sc | Sc | Sc |}
	\hline
	\diagbox[width=2.5em]{$x$}{$a$} & 0 & 1 & 2 \\ \hhline{|====|}
	$0$  & $\proj 0_{\tilde{A}}$ & $\proj 1_{\tilde{A}}$ & $\proj 2_{\tilde{A}}$ \\ \hline
	$1$ & $(\sigma^x)^+$ & $(\sigma^x)^-$ & $\proj 2$ \\ \hline
	$2$ & $\proj 0$ & $[V(\sigma^x)]^+$ & $[V(\sigma^x)]^-$ \\ \hline
\end{tabular}
\end{table}

\begin{table}[H]
    \centering
    \caption{Bob's ideal measurements for $G_{3\textnormal{-CHSH}}$. The entry in cell $y,b$ is the projector $P_y^b$.}
    \label{tab: 3chsh ideal bob}
    \begin{tabular}{| c || Sc | Sc | Sc | Sc | Sc |}
	\hline
	\diagbox[width=2.5em]{$y$}{$b$} & 0 & 1 & 2 \\ \hhline{|====|}
	$0$  & $(\sigma^z_{\alpha = 1})^+$ & $(\sigma^z_{\alpha = 1})^-$ & $\proj 2$ \\ \hline
	$1$  & $(\sigma^x_{\alpha = 1})^+$ & $(\sigma^x_{\alpha = 1})^-$ & $\proj 2$ \\ \hline
	$2$  & $\proj 0$ & $[V(\sigma^z_{\alpha=1})]^+$ & $[V(\sigma^z_{\alpha=1})]^-$ \\ \hline
	$3$  & $\proj 0$ & $[V(\sigma^x_{\alpha=1})]^+$ & $[V(\sigma^x_{\alpha=1})]^-$ \\ 
	\hline
\end{tabular}
\end{table}

\end{definition}

We emphasize, as it will be important later, that both the ideal strategies for $G_{\textnormal{tCHSH}(\alpha)}$ and $G_{d\textnormal{-CHSH}}$ include a computational basis measurement for the first player on question ``$0$''.

\subsection{Embezzlement}
\label{sec: embezzlement}
The phenomenon of \textit{embezzlement} was first discovered by van Dam and Hayden \cite{van2003universal}. A family of embezzling states can be used to coherently transform a product state into an EPR pair (or viceversa). The fidelity of this transformation increases with the dimension of the embezzling state.

\begin{definition}[Embezzlement]
\label{def: embezzlement}
Let $\{\ket{\Gamma_d}\}_{d \in \mathbb{N}}$ be a collection of states, where $\ket{\Gamma_d} \in (\mathbb{C}^2)^{\otimes d}_{A'} \otimes (\mathbb{C}^2)^{\otimes d}_{B'}$. We say that $\{\ket{\Gamma_d}\}_{d \in \mathbb{N}}$ is an ``embezzling family'' if there exist unitaries $W_{AA'}$ on $\mathbb{C}^2_{A} \otimes (\mathbb{C}^2)^{\otimes d}_{A'}$ and $W_{BB'}$ on $\mathbb{C}^2_{B} \otimes (\mathbb{C}^2)^{\otimes d}_{B'}$ such that $$ \| W_{AA'} \otimes W_{BB'} \ket{\textnormal{EPR}}_{AB}\ket{\Gamma_d}_{A'B'} - \ket{11}_{AB}\ket{\Gamma_d}_{A'B'} \| = O\left(\frac{1}{\sqrt{d}}\right)
.$$
\end{definition}

\begin{example}
\label{ex: emb family}
Let $\ket{\Gamma_d} = \frac{1}{\sqrt{N_d}} \sum_{j=1}^d \ket{11}_{A'B'}^{\otimes j}\ket{\textnormal{EPR}}^{\otimes (d - j)}_{A'B'}$, where $N_d$ is a normalizing constant. Then, the family of states $\{\ket{\Gamma_d}\}$ is an embezzling family. The unitaries $W_{AA'}$ and $W_{BB'}$ are the ``left-shift'' unitaries, which act on  $\mathbb{C}^2_{A} \otimes (\mathbb{C}^2)^{\otimes d}_{A'}$ and $\mathbb{C}^2_{B} \otimes (\mathbb{C}^2)^{\otimes d}_{B'}$ respectively, by shifting by one to the left each of the $d+1$ qubit registers. It is easy to check that the family of states $\{\ket{\Gamma_d}\}_{d \in \mathbb{N}}$ satisfies Definition \ref{def: embezzlement}.
\end{example}

\section{Our non-local game}
\label{sec: our non-local game}
The following is our non-local game. We describe it informally first, and then we give a precise description in Fig. \ref{fig: our non-local game}. We refer to Alice and Bob as the two players in our non-local game.

The non-local game consists of three tests, run with equal probability.
\begin{itemize}
    \item[(a)] In the first test, the verifier sends both players questions from the game $G_{3\textnormal{-CHSH}}$, and they obtain a score according to its scoring function.
    \item [(b)] In the second test, the verifier sends both players questions from the (flipped) tilted CHSH game $G_{\sim\textnormal{tCHSH}}(\frac{1}{\sqrt{2}})$. Importantly, their roles are also switched: Alice is sent the questions of player $2$ in $G_{\sim\textnormal{tCHSH}}(\frac{1}{\sqrt{2}})$, and Bob the questions of player $1$. They obtain a score according to the scoring function of $G_{\sim\textnormal{tCHSH}}(\frac{1}{\sqrt{2}})$.
    \item[(c)] In the third test, Alice receives the ``computational basis'' question (question ``$0$'' of the first player) from the game $G_{3\textnormal{-CHSH}}$, and Bob receives the ``computational basis'' question (question ``$0$'' of the first player) from the game $G_{\sim\textnormal{tCHSH}}(\frac{1}{\sqrt{2}})$. The players' score is $1$ if: Alice answers $0$ if and only if Bob answers $0$. They score $0$ otherwise. 
\end{itemize}

The intuition behind this game is the following. 

If Alice and Bob's strategy attains an $\epsilon$-close to optimal expected score overall (where optimally here means playing perfectly in all three tests), then it must attain a $3\epsilon$-close to optimal expected score in each of the three tests. 
By the self-testing result of Theorem \ref{thm: self-testing gen chsh}, in order to play $3\epsilon$-close to optimally in $(a)$, the players need to be sharing a state close to a maximally entangled state of qutrits, up to a local isometry, and moreover one of Alice's measurements is a ``computational basis'' measurement. By Theorem \ref{thm: tilted chsh self testing sim}, in order to play $3\epsilon$-close to optimally in $(b)$, Alice and Bob must be measuring a state close to a tilted EPR pair with ratio $\frac{1}{\sqrt{2}}$, up to a local isometry. Moreover one of Bob's measurements must be a ``computational basis'' measurement. 
Crucially, Alice cannot distinguish her question in $(c)$ from a ``computational basis'' question in $(a)$, while Bob cannot distinguish his question in $(c)$ from a ``computational basis'' question in $(b)$. In order to play close to optimally in $(c)$, Alice and Bob's computational basis measurements need to satisfy a consistency condition. It is this consistency condition that forces the two players to ``agree'' on a computational basis element $\ket{00} \in \mathbb{C}^3_A \otimes \mathbb{C}^3_B$, and to perform a coherent state exchange such that: \begin{equation}
\label{coherent ex}
    \frac{1}{\sqrt{3}}(\ket{00} + \ket{11} + \ket{22})_{AB} \mapsto \frac{1}{\sqrt{3}}(\ket{00} + \sqrt{2}\ket{11})_{AB},
\end{equation}
with $\ket{00}_{AB} \mapsto \ket{00}_{AB}$ and $\frac{1}{\sqrt{2}}(\ket{11} + \ket{22})_{AB} \mapsto \ket{11}_{AB}$. The LHS of \eqref{coherent ex} is the state that the players need in order to play part (a) perfectly, while the RHS is the state that they need to play part (b) perfectly. Part (c) ensures that players have to ``agree'' on the term $\ket{00}$, and this enforces that they must perform coherently the exchange in \eqref{coherent ex} to high accuracy if they are to perform well in all three parts.


Next, we give a precise description of our non-local game $G_{emb}$. We denote by $V_{emb}$ its scoring function. Recall that $V_{3\textnormal{-CHSH}}$ and $V_{\sim\textnormal{tCHSH}}(\frac{1}{\sqrt{2}})$ are the scoring functions for games $G_{3\textnormal{-CHSH}}$ and $G_{\sim\textnormal{tCHSH}}(\frac{1}{\sqrt{2}})$ respectively.

\begin{figure}[H]
\rule[1ex]{16.5cm}{0.5pt}
Question sets: $\mathcal{X} := \Big(\{\textnormal{``$3$-CHSH''}\} \times \{0,1,2\} \Big) \cup \Big(\{\textnormal{``$\sim$tCHSH($\frac{1}{\sqrt{2}}$)''}\} \times \{0,1\} \Big)$, and $\mathcal{Y} := \Big(\{\textnormal{``$3$-CHSH''}\} \times \{0,1,2,3\} \Big) \cup \Big(\{\textnormal{``$\sim$tCHSH($\frac{1}{\sqrt{2}}$)''}\} \times \{0,1\} \Big)$.
Answer sets: $\mathcal{A} = \mathcal{B} = \{0,1,2\}$.\\
The game consists of the following three parts, executed with equal probability.
\begin{itemize}
    \item[(a)] Pick uniformly random $x' \in \{0,1,2\}$ and $y' \in \{0,1,2,3\}$. Send $x = (\textnormal{``$3$-CHSH''}, x')$ to Alice and $y = (\textnormal{``$3$-CHSH''}, y')$ to Bob. Let $a$ and $b$ be the players' answers. The players' score is $V_{emb}(a,b,x, y) = V_{3\textnormal{-CHSH}}(a,b,x',y')$.
    \item[(b)] Pick uniformly random $x' \in \{0,1\}$ and $y' \in \{0,1\}$. Send $x = (\textnormal{``$\sim$tCHSH($\frac{1}{\sqrt{2}}$)''}, x')$ to Alice and $y = (\textnormal{``$\sim$tCHSH($\frac{1}{\sqrt{2}}$)''}, y')$ to Bob. Let $a$ and $b$ be the players' answers. The players' score is $V_{emb}(a,b,x, y) = V_{\sim\textnormal{tCHSH}(\frac{1}{\sqrt{2}})}(b,a,y',x')$ (notice that the roles of the two players is switched in the last expression).
    \item[(c)] Send question $x = (\textnormal{``$3$-CHSH''}, 0)$ to Alice, and question $y = (\textnormal{``$\sim$tCHSH($\frac{1}{\sqrt{2}}$)''}, 0)$ to Bob. Let $a$ and $b$ be the players' answers. Their score is $$V_{emb}(a,b,x,y) = \begin{cases}
    1,    & \text{if } (a,b) \in \{(0,0)\} \cup \left(\{1,2\} \times \{1,2\}\right)\\
    0,  & \text{otherwise}
\end{cases}$$
\end{itemize}

\rule[2ex]{16.5cm}{0.5pt}\vspace{-.5cm}
\caption{Our non-local game $G_{emb}$}
  \label{fig: our non-local game}
\end{figure}

\begin{prop}
\label{prop: value}
The value of the non-local game $G_{emb}$ of Fig. \ref{fig: our non-local game} is $\omega^*(G_{emb}) = \frac13 (\omega^*_{3\textnormal{-CHSH}}+\omega^*_{\sim\textnormal{tCHSH}(\frac{1}{\sqrt{2}})} + 1)$.
\end{prop}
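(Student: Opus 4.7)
The plan is to prove the equality by establishing matching upper and lower bounds on $\omega^*(G_{emb})$.

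For the upper bound, the key observation is that $V_{emb}$ is a uniform mixture over the three sub-tests (a), (b), (c). Hence for any quantum strategy $S = (\ket{\Psi}, \{P_x^a\}, \{Q_y^b\})$,
\begin{equation}
\omega(S, G_{emb}) \;=\; \frac{1}{3}\bigl( V_a(S) + V_b(S) + V_c(S) \bigr),
\end{equation}
where $V_a(S)$, $V_b(S)$, $V_c(S)$ are the expected scores conditional on the verifier selecting sub-test (a), (b), (c) respectively. The contribution $V_a(S)$ is precisely the value achieved in $G_{3\textnormal{-CHSH}}$ by the strategy obtained from $S$ by restricting to the questions marked ``$3$-CHSH'', and is therefore bounded above by $\omega^*_{3\textnormal{-CHSH}}$. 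The contribution $V_b(S)$ is analogously bounded above by $\omega^*_{\sim\textnormal{tCHSH}(\frac{1}{\sqrt{2}})}$ (note that since the players' roles are switched in the description of sub-test (b), one is really bounding the value of a strategy for $G_{\sim\textnormal{tCHSH}(\frac{1}{\sqrt{2}})}$ with the roles of Alice and Bob swapped, which has the same quantum value). Finally $V_c(S) \leq 1$ trivially from the definition of the scoring rule in part (c). Combining these three bounds gives $\omega^*(G_{emb}) \leq \frac{1}{3}(\omega^*_{3\textnormal{-CHSH}} + \omega^*_{\sim\textnormal{tCHSH}(\frac{1}{\sqrt{2}})} + 1)$.

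For the lower bound, the plan is to exhibit a family of strategies $\{S_d\}_{d \in \mathbb{N}}$ with $\omega(S_d, G_{emb}) \to \frac{1}{3}(\omega^*_{3\textnormal{-CHSH}} + \omega^*_{\sim\textnormal{tCHSH}(\frac{1}{\sqrt{2}})} + 1)$ as $d \to \infty$. This is the content of Section \ref{sec: completeness}, and the construction rests on embezzlement. Alice and Bob share an embezzling state $\ket{\Gamma_d}$ of large dimension along with a qutrit workspace. On receiving a question from part (a) they apply local unitaries on their half of the embezzling state (together with the workspace) to approximately produce the ideal $G_{3\textnormal{-CHSH}}$ state $\tfrac{1}{\sqrt{3}}(\ket{00}+\ket{11}+\ket{22})$ on the workspace, then apply the ideal measurements of Definition \ref{def: ideal 3-chsh}. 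On receiving a question from part (b) they instead produce $\tfrac{1}{\sqrt{3}}(\ket{00}+\sqrt{2}\ket{11})$ and apply the ideal measurements associated with Theorem \ref{thm: tilted chsh self testing sim}. By choosing the coherent transformations so that the $\ket{00}$ component is preserved and $\tfrac{1}{\sqrt{2}}(\ket{11}+\ket{22})$ is mapped to $\ket{11}$, the consistency condition of part (c) is met in the $d \to \infty$ limit, while parts (a) and (b) are won with values converging to $\omega^*_{3\textnormal{-CHSH}}$ and $\omega^*_{\sim\textnormal{tCHSH}(\frac{1}{\sqrt{2}})}$ respectively, since the embezzlement error scales as $O(1/\sqrt{d})$.

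The upper bound is essentially immediate. The main subtlety lies in the lower bound: the \emph{same} strategy (same state, same measurements) must be used across all three sub-tests simultaneously, since Alice cannot distinguish the ``$0$'' question of (a) from the question of (c), and similarly for Bob with (b). The embezzling construction handles this consistency by design, and the detailed verification of the error bounds is deferred to Section \ref{sec: completeness}.
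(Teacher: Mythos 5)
Your proof is correct and takes essentially the same approach as the paper: the upper bound comes from bounding each of the three sub-tests by its respective optimal value (the paper phrases this as a contradiction, you phrase it directly, but it is the same argument), and the lower bound defers to the embezzlement-based family of strategies constructed in Section~\ref{sec: completeness}. Your explicit handling of the role-swap in part (b) is a nice touch that the paper elides.
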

\begin{proof}
Clearly, $\omega^* \leq \frac13 (\omega^*_{3\textnormal{-CHSH}}+\omega^*_{\sim\textnormal{tCHSH}(\frac{1}{\sqrt{2}})} + 1)$. Otherwise, there would exist a strategy $S$ such that the value $\omega(S,G_{emb})> \frac13 (\omega^*_{3\textnormal{-CHSH}}+\omega^*_{\sim\textnormal{tCHSH}(\frac{1}{\sqrt{2}})} + 1)$. This would imply that at least one of the following holds:
\begin{itemize}
    \item The restriction of $S$ to part (a) has value greater than $\omega^*_{3\textnormal{-CHSH}}$.
    \item The restriction of $S$ to part (b) has value greater than $\omega^*_{\sim\textnormal{tCHSH}}$.
    \item The restriction of $S$ to part (c) has value greater than $1$.
\end{itemize}
All three of the above are clearly impossible. 

On the other hand, we will construct in the next section a sequence of strategies whose value in $G$ gets arbitrarily close to $\frac13 (\omega^*_{3\textnormal{-CHSH}}+\omega^*_{\sim\textnormal{tCHSH}(\frac{1}{\sqrt{2}})} + 1)$. This completes the proof.
\end{proof}

\section{Completeness}
\label{sec: completeness}
In this section, we describe a family of strategies whose value in our non-local game $G_{emb}$ gets arbitrarily close to $\frac13 (\omega^*_{3\textnormal{-CHSH}}+\omega^*_{\sim\textnormal{tCHSH}(\frac{1}{\sqrt{2}})} + 1)$ (which also completes the proof of Proposition \ref{prop: value}).
A strategy in the family is parametrized by $d \in \mathbb{N}$. The provers start with the state \begin{equation}
\label{eq: starting state}
    \frac{1}{\sqrt{3}}(\ket{00}+\ket{11}+\ket{22})_{\tilde{A}\tilde{B}} \otimes \ket{\Gamma_d}_{A'B'}
\end{equation}
where $\ket{\Gamma_d}$ is an embezzling state. 
We give first an informal description of the ideal measurements, and we follow this by a formal description. 
\begin{itemize}
    \item Upon receiving a question with prefix \textnormal{``3-CHSH''}, Alice and Bob perform the corresponding ideal measurement for \textnormal{3-CHSH}. In particular on question $(\textnormal{``3-CHSH''}, 0)$, Alice measures her half of the state in \eqref{eq: starting state} in the computational basis.
    \item Upon receiving a question with prefix ``$\sim$tCHSH($\frac{1}{\sqrt{2}}$)'', Alice and Bob first apply embezzling unitaries $W_{\tilde{A}A'}$ and $W_{\tilde{B}B'}$ respectively, such that (approximately)
  $\frac{1}{\sqrt{2}}(\ket{11}+ \ket{22}) \mapsto \ket{11}$ and $\ket{00} \mapsto \ket{00}$. So the resulting state is
  \begin{equation}
  \label{eq: state 2}
      \sqrt{\frac23}\left(\frac{1}{\sqrt{2}}\ket{00}+\ket{11}\right)_{\tilde{A}\tilde{B}} \otimes \ket{\Gamma_d}_{A'B'}.
  \end{equation}
 They then perform the corresponding ideal measurements for $\textnormal{$\sim$tCHSH}(\frac{1}{\sqrt{2}})$ on registers $\tilde{A}, \tilde{B}$ (where Alice takes the role of the second player, and Bob takes the role of the first player). In particular, on question $(\textnormal{``$\sim$tCHSH($\frac{1}{\sqrt{2}}$)''}, 0)$, Bob measures his half of the state in \eqref{eq: state 2} in the computational basis.
\end{itemize}
A key observation is that when Alice and Bob are asked questions $(\textnormal{``3-CHSH''}, 0)$ and $(\sim\textnormal{``tCHSH($\frac{1}{\sqrt{2}}$)''}, 0)$ respectively, then it is straightforward to see that, if they follow the above strategy, they reply with answers $(a,b)$ which attain a score of $1$ in part (c) of Fig. \ref{fig: our non-local game}, i.e. $(a,b) \in \{(0,0)\} \cup \left(\{1,2\} \times \{1,2\}\right)$. 

Next, we define the players' ideal measurements precisely. Recall the isometry $V: \C^2 \to \C^3$ defined in subsection \ref{sec: non-local games examples} as follows:
\begin{equation}
	V \ket 0 = \ket {1},\,\,\,
    V \ket 1 = \ket {2}
\end{equation}
Recall also the notation introduced in subsection \ref{sec: non-local games examples} along with $V$. In particular, we write $V(O)$ to refer to the pushforward $VO V^\dagger$ of $O$ along $V$. For $O$ an operator with $+1,0,-1$ eigenvalues, we write $O^+$ for the projection onto the $+1$ eigenspace and $O^-$ for the projection onto the $-1$ eigenspace. If $\m H_A \approx \C^3$, we still write $\sigma^z_A$ to mean $\sigma^z_A = \proj 0_A - \proj 1_A$. On the other hand, in accordance with the notation above, we write $V(\sigma^z)_A$ to mean  $V(\sigma^z)_A = \proj 1_A - \proj 2_A$. 

 Let $\{\ket{\Gamma_d}_{ABA'B'}\}$ be the embezzling family from Example \ref{ex: emb family}, and $W_{AA'}: (\C^2)_{A} \otimes (\C^2)^{\otimes d}_{A'} \rightarrow (\C^2)_{A} \otimes (\C^2)^{\otimes d}_{A'}$, $W_{BB'}:(\C^2)_{B} \otimes (\C^2)^{\otimes d}_{B'} \rightarrow (\C^2)_{A} \otimes (\C^2)^{\otimes d}_{A'}$ be the left-shift unitaries over the $d+1$ qubit registers. Define $\tilde{W}_{\tilde{A}A'}: (\C^3)_{\tilde{A}} \otimes (\C^2)^{\otimes d}_{A'} \rightarrow (\C^3)_{\tilde{A}} \otimes (\C^2)^{\otimes d}_{A'}$ as $$\tilde{W}_{\tilde{A}A'} = \left(\proj 0_{\tilde{A}} \otimes I_{A'} \right) \oplus [(V \otimes I) W_{AA'}(V^{\dagger} \otimes I)],$$ and define $\tilde{W}_{\tilde{B}B'}$ analogously.

The following is the family of ideal strategies for $G_{emb}$ achieving a value arbitrarily close to $\frac13 (\omega^*_{3\textnormal{-CHSH}}+\omega^*_{\sim\textnormal{tCHSH}(\frac{1}{\sqrt{2}})} + 1)$.

\begin{definition}[Ideal strategy for $G_{emb}$]
\label{def: ideal strategy}
The family of ideal strategies is $\{S_d\}_{d \in \mathbb{N}}$, with $S_d = (\ket{\Psi_d}, \{P_x^a\}, \{Q_y^b\})$, where $$\ket{\Psi_d} = \frac{1}{\sqrt{3}}(\ket{00}+\ket{11}+\ket{22})_{\tilde{A}\tilde{B}} \otimes \ket{\Gamma_d}_{A'B'},$$ and the ideal measurements are described in Tables \ref{tab:alice-ideal} and \ref{tab:bob-ideal}.

\begin{table}[H]
    \caption{Alice's ideal measurements for $G_{emb}$. The entry in cell $x,a$ is the projector $P_{x}^a$ (tensored identities are implied where omitted, and $P_{rest}$ completes the set of orthogonal projections in a row).}
    \label{tab:alice-ideal}
    \centering
    \begin{tabular}{| c || Sc | Sc | Sc | Sc | Sc |}
	\hline
	\diagbox[width=2.5em]{$x$}{$a$} & 0 & 1 & 2 \\ \hhline{|====|}
	(``3-CHSH'', $0$)  & $\proj 0_{\tilde{A}}$ & $\proj 1_{\tilde{A}}$ & $\proj 2_{\tilde{A}}$ \\ \hline
	(``3-CHSH'', $1$) & $(\sigma^x)^+_{\tilde{A}}$ & $(\sigma^x)^-_{\tilde{A}}$ & $\proj 2_{\tilde{A}}$ \\ \hline
	(``3-CHSH'', $2$) & $\proj 0_{\tilde{A}}$ & $[V(\sigma^x)]^+_{\tilde{A}}$ & $[V(\sigma^x)]^-_{\tilde{A}}$ \\ \hline
	(``$\sim$tCHSH($\frac{1}{\sqrt{2}}$), $0$) & $W_{\tilde{A}A'}^{\dagger}([\sigma^x( \sigma^z_{\alpha = \frac{1}{\sqrt{2}}})^-\sigma^x]_{\tilde{A}} \otimes I_{A'}) W_{\tilde{A}A'} $ & $W_{\tilde{A}A'}^{\dagger}([\sigma^x( \sigma^z_{\alpha = \frac{1}{\sqrt{2}}})^+\sigma^x]_{\tilde{A}} \otimes I_{A'}) W_{\tilde{A}A'} $ & $P_{rest}$\\
	\hline
	(``$\sim$tCHSH($\frac{1}{\sqrt{2}}$), $1$) & $W_{\tilde{A}A'}^{\dagger}([\sigma^x( \sigma^x_{\alpha = \frac{1}{\sqrt{2}}})^-\sigma^x]_{\tilde{A}} \otimes I_{A'}) W_{\tilde{A}A'} $ & $W_{\tilde{A}A'}^{\dagger}([\sigma^x( \sigma^x_{\alpha = \frac{1}{\sqrt{2}}})^+\sigma^x]_{\tilde{A}} \otimes I_{A'}) W_{\tilde{A}A'} $ & $P_{rest}$ \\
	\hline
\end{tabular}
\end{table}

\begin{table}[H]
    \centering
    \caption{Bob's ideal measurements for $G_{emb}$. The entry in cell $y,b$ is the projector $P_{y}^b$ (tensored identities are implied where omitted, and $P_{rest}$ completes the set of orthogonal projections in a row).}
    \label{tab:bob-ideal}
    \begin{tabular}{| c || Sc | Sc | Sc | Sc | Sc |}
	\hline
	\diagbox[width=2.5em]{$y$}{$b$} & 0 & 1 & 2 \\ \hhline{|====|}
	(``3-CHSH'', $0$)  & $(\sigma^z_{\alpha = 1})^+_{\tilde{B}}$ & $(\sigma^z_{\alpha = 1})^-_{\tilde{B}}$ & $\proj 2_{\tilde{B}}$ \\ \hline
	(``3-CHSH'', $1$)  & $(\sigma^x_{\alpha = 1})^+_{\tilde{B}}$ & $(\sigma^x_{\alpha = 1})^-_{\tilde{B}}$ & $\proj 2_{\tilde{B}}$ \\ \hline
	(``3-CHSH'', $2$)  & $\proj 0_{\tilde{B}}$ & $[V(\sigma^z_{\alpha=1})]^+_{\tilde{B}}$ & $[V(\sigma^z_{\alpha=1})]^-_{\tilde{B}}$ \\ \hline
	(``3-CHSH'', $3$)  & $\proj 0_{\tilde{B}}$ & $[V(\sigma^x_{\alpha=1})]^+_{\tilde{B}}$ & $[V(\sigma^x_{\alpha=1})]^-_{\tilde{B}}$ \\ 
	\hline
	(``$\sim$tCHSH($\frac{1}{\sqrt{2}}$), $0$) & $W_{\tilde{B}B'}^{\dagger}([\sigma^x( \sigma^z)^-\sigma^x]_{\tilde{B}} \otimes I_{B'}) W_{\tilde{B}B'} $ & $W_{\tilde{B}B'}^{\dagger}([\sigma^x( \sigma^z)^+\sigma^x]_{\tilde{B}} \otimes I_{B'}) W_{\tilde{B}B'} $ & $P_{rest}$ \\ \hline
	(``$\sim$tCHSH($\frac{1}{\sqrt{2}}$), $1$) & $W_{\tilde{B}B'}^{\dagger}([\sigma^x( \sigma^x)^-\sigma^x]_{\tilde{B}} \otimes I_{B'}) W_{\tilde{B}B'} $ & $W_{\tilde{B}B'}^{\dagger}([\sigma^x( \sigma^x)^+\sigma^x]_{\tilde{B}} \otimes I_{B'}) W_{\tilde{B}B'} $ & $P_{rest}$ \\ \hline
\end{tabular}
\end{table}

\end{definition}

Note that the ideal operators for the equivalent version of tilted CHSH from Theorem \ref{thm: tilted chsh self testing sim} are the same as for the original tilted CHSH of Definition \ref{def: tchsh as game}, except conjugated by $\sigma_x$ (i.e. the answers are flipped).

\begin{prop}[Completeness]
\label{prop: completeness}
Let $\{S_d\}_{d\in \mathbb{N}}$ be the family of strategies from Definition \ref{def: ideal strategy}, and $G_{emb}$ the non-local game from Fig. \ref{fig: our non-local game}. Then, $
\omega(S_d, G_{emb}) = \omega^*(G_{emb})- O(\frac{1}{d})$.
\end{prop}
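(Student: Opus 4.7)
The plan is to analyze the three parts of $G_{emb}$ separately and sum their contributions; parts (a) and (c) will be exact for our strategy, and all the loss is concentrated in part (b). Part (a) is lossless because the measurements for ``$3$-CHSH'' questions act only on $\tilde A,\tilde B$ and coincide with the ideal strategy of Definition~\ref{def: ideal 3-chsh} applied to the reduced state $\tfrac{1}{\sqrt 3}(\ket{00}+\ket{11}+\ket{22})_{\tilde A\tilde B}$, with the ancilla $\ket{\Gamma_d}$ untouched; hence part (a) contributes exactly $\omega^*_{3\textnormal{-CHSH}}$. For part (c), I would use the direct-sum form of $\tilde W$: the summand $\proj 0\otimes I$ fixes $\ket 0_{\tilde B}\ket{\Gamma_d}_{B'}$, and the other summand $(V\otimes I)W_{BB'}(V^\dagger\otimes I)$ maps $\mathrm{span}\{\ket 1,\ket 2\}_{\tilde B}\otimes(\C^2)^{\otimes d}_{B'}$ into itself (since $V(\C^2)=\mathrm{span}\{\ket 1,\ket 2\}$). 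Thus Alice's outcome $0$ deterministically forces Bob's outcome to $0$, and Alice's outcome in $\{1,2\}$ deterministically forces Bob's outcome into $\{1,2\}$; part (c) always scores $1$.

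The main step is part (b). Let $\ket{\phi^*}_{\tilde A\tilde B}$ be the ideal $\sim$tCHSH$(1/\sqrt 2)$ state embedded into $\C^3\otimes\C^3$ in the manner consistent with the tables (the third ternary outcome absorbing any component outside the embedded qubit subspace). I would show that
\[
(\tilde W_{\tilde AA'}\otimes\tilde W_{\tilde BB'})\ket{\Psi_d}\;\approx_{O(1/\sqrt d)}\;\ket{\phi^*}_{\tilde A\tilde B}\ket{\Gamma_d}_{A'B'}.
\]
This follows because $\tilde W\otimes\tilde W$ fixes the $\ket{00}_{\tilde A\tilde B}$ summand, and on the $\tfrac{1}{\sqrt 2}(\ket{11}+\ket{22})_{\tilde A\tilde B}$ summand---identified via $V^\dagger\otimes V^\dagger$ with the qubit EPR pair---it acts as $W_{AA'}\otimes W_{BB'}$ sandwiched by $V^\dagger\otimes V^\dagger$ and $V\otimes V$, for which Definition~\ref{def: embezzlement} gives the $O(1/\sqrt d)$ norm bound. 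On $\ket{\phi^*}\ket{\Gamma_d}$ the part-(b) measurements (the ideal $\sim$tCHSH$(1/\sqrt 2)$ measurements transported along this embedding) then attain value exactly $\omega^*_{\sim\textnormal{tCHSH}(1/\sqrt 2)}$.

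The only delicate step---and what I expect to be the main obstacle---is upgrading the $O(1/\sqrt d)$ state error to an $O(1/d)$ value deficit. Let $M$ be the part-(b) scoring operator on $\tilde A\tilde B$. Because $\omega^*_{\sim\textnormal{tCHSH}(1/\sqrt 2)}$ is the maximum eigenvalue of $M$ and is attained by $\ket{\phi^*}$, the vector $\ket{\phi^*}\ket{\Gamma_d}$ is a top eigenvector of $M\otimes I_{A'B'}$ with eigenvalue $\omega^*_{\sim\textnormal{tCHSH}(1/\sqrt 2)}$. Writing $(\tilde W\otimes\tilde W)\ket{\Psi_d}=\ket{\phi^*}\ket{\Gamma_d}+\ket\delta$ with $\|\ket\delta\|=O(1/\sqrt d)$, unitarity of $\tilde W\otimes\tilde W$ forces $2\,\textnormal{Re}\langle\phi^*\Gamma_d|\delta\rangle=-\|\ket\delta\|^2=O(1/d)$, so both the first-order correction $2\omega^*_{\sim\textnormal{tCHSH}(1/\sqrt 2)}\,\textnormal{Re}\langle\phi^*\Gamma_d|\delta\rangle$ and the second-order correction $\langle\delta|M\otimes I|\delta\rangle$ are $O(1/d)$. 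Summing the three parts yields $\omega(S_d,G_{emb})=\tfrac13(\omega^*_{3\textnormal{-CHSH}}+\omega^*_{\sim\textnormal{tCHSH}(1/\sqrt 2)}+1)-O(1/d)=\omega^*(G_{emb})-O(1/d)$.
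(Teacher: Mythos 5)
Your proof is correct and follows essentially the same route as the paper: decompose $G_{emb}$ into parts (a), (b), (c), observe that (a) and (c) are played perfectly, and locate the entire $O(1/d)$ loss in part (b). The only difference is expository: the paper phrases the part-(b) loss via "fidelity $1-O(1/d)$ with the ideal state, plus ideal measurements," whereas you carry out the Euclidean-norm expansion explicitly (noting that normalization forces $2\,\mathrm{Re}\langle\phi^*\Gamma_d|\delta\rangle=-\|\delta\|^2$ and that $\ket{\phi^*}$ is a top eigenvector of the Bell operator, so the first-order term collapses to second order) -- this is the same observation the paper implicitly relies on, just spelled out.
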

\begin{proof}
The value of strategy $S_d$ in part (a) is exactly $\omega^*_{3\textnormal{-CHSH}}$. This is because the starting state is the ideal state for $\omega^*_{3\textnormal{-CHSH}}$ and measurements are the ideal ones from Definition \ref{def: ideal 3-chsh}. The value in part (b) is $\omega^*_{\sim\textnormal{tCHSH}(\frac{1}{\sqrt{2}})} - O(\frac{1}{d})$. This is because the joint state resulting from the embezzling transformation has fidelity $1-O(\frac{1}{d})$ with the ideal state for $\textnormal{$\sim$tCHSH}(\frac{1}{\sqrt{2}})$ (from Theorem \ref{thm: tilted chsh self testing sim}), and the measurements for part (b) are also ideal. The value in part (c) is easily seen to be exactly $1$. Thus, $\omega(S_d, G_{emb}) = \frac13 (\omega^*_{3\textnormal{-CHSH}}+\omega^*_{\sim\textnormal{tCHSH}(\frac{1}{\sqrt{2}})} + 1) - O(\frac{1}{d})$. Together with the upper bound in the proof of Proposition \ref{prop: value}, this completes the proof of Proposition \ref{prop: value} (i.e. $\omega^*(G_{emb}) = \frac13 (\omega^*_{3\textnormal{-CHSH}}+\omega^*_{\sim\textnormal{tCHSH}(\frac{1}{\sqrt{2}})} + 1)$), and gives $\omega(S_d, G_{emb}) = \frac13 (\omega^*_{3\textnormal{-CHSH}}+\omega^*_{\sim\textnormal{tCHSH}(\frac{1}{\sqrt{2}})} + 1) - O(\frac{1}{d})$, as desired.
\end{proof}

\section{Soundness}
\label{sec: soundness}
\begin{theorem}
\label{thm: main}
There exists a constant $C>0$ such that any quantum strategy $S$ for the game $G_{emb}$ of Fig. \ref{fig: our non-local game} with value $\omega(S, G_{emb}) \geq \omega^*(G_{emb}) - \epsilon$, for some $0<\epsilon < C$, must have dimension $2^{\Omega(\epsilon^{-1/8})}$.
\end{theorem}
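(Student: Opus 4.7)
The plan is to combine the self-testing guarantees from sub-tests (a), (b) with the consistency check (c) to force any near-optimal strategy to implement an approximate embezzlement transformation via local unitaries, and then invoke the dimension lower bound for embezzlement. The first step is to observe that any strategy $S$ with $\omega(S, G_{emb})\geq\omega^*(G_{emb})-\epsilon$ must win each of the three equally-weighted sub-tests within $3\epsilon$ of its optimum. Applying Theorem \ref{thm: self-testing gen chsh} (with $d=3$) to sub-test (a), I obtain local unitaries $U_a = U_a^A\otimes U_a^B$ and an auxiliary state $\ket{aux_a}$ such that, writing $\Pi_A$ for Alice's projector for answer $0$ to question $(\text{``3-CHSH''},0)$, both $U_a\ket{\Psi}\approx_{O(\epsilon^{1/8})}\tfrac{1}{\sqrt{3}}(\ket{00}+\ket{11}+\ket{22})_{\tilde A\tilde B}\otimes\ket{aux_a}$ and $U_a(\Pi_A\otimes I)\ket{\Psi}\approx_{O(\epsilon^{1/8})}\tfrac{1}{\sqrt{3}}\ket{00}\otimes\ket{aux_a}$ hold. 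Analogously, Theorem \ref{thm: tilted chsh self testing sim} applied to (b) yields $U_b = U_b^A\otimes U_b^B$ and $\ket{aux_b}$ satisfying $U_b\ket{\Psi}\approx_{O(\sqrt\epsilon)}\tfrac{1}{\sqrt{3}}(\ket{00}+\sqrt{2}\ket{11})\otimes\ket{aux_b}$ and $U_b(I\otimes\Pi_B)\ket{\Psi}\approx_{O(\sqrt\epsilon)}\tfrac{1}{\sqrt{3}}\ket{00}\otimes\ket{aux_b}$, with $\Pi_B$ Bob's analogous ``computational basis'' projector. Soundness of (c) expands to the consistency bound $\|(\Pi_A\otimes(I-\Pi_B))\ket{\Psi}\|^2 + \|((I-\Pi_A)\otimes\Pi_B)\ket{\Psi}\|^2\leq 3\epsilon$, hence $(\Pi_A\otimes I)\ket{\Psi}\approx_{O(\sqrt\epsilon)}(I\otimes\Pi_B)\ket{\Psi}$.

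The heart of the argument is to set $W := U_bU_a^\dagger$, a product of local unitaries $W_A\otimes W_B$ with $W(U_a\ket{\Psi}) = U_b\ket{\Psi}$ by construction, and to exploit the identity $W\cdot(I\otimes U_a^B\Pi_B(U_a^B)^\dagger) = (I\otimes U_b^B\Pi_B(U_b^B)^\dagger)\cdot W$. Evaluating both sides on $U_a\ket{\Psi}$ and simplifying each using the consistency bound and the two self-testing statements, $W$ is shown to map $\ket{00}_{\tilde A\tilde B}\otimes\ket{aux_a}$ to $\ket{00}_{\tilde A\tilde B}\otimes\ket{aux_b}$ up to error $O(\epsilon^{1/8})$. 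Subtracting this component from the exact equality $W(U_a\ket{\Psi}) = U_b\ket{\Psi}$ and using the approximate ideal forms of $U_a\ket{\Psi}$ and $U_b\ket{\Psi}$, the orthogonal component gives $W(\tfrac{1}{\sqrt{2}}(\ket{11}+\ket{22})_{\tilde A\tilde B}\otimes\ket{aux_a})\approx_{O(\epsilon^{1/8})}\ket{11}_{\tilde A\tilde B}\otimes\ket{aux_b}$. Together these two statements constitute, to accuracy $O(\epsilon^{1/8})$, an approximate embezzlement: $W_A\otimes W_B$ converts a maximally entangled qubit pair (on the $\{\ket{1},\ket{2}\}$ subspace) tensored with resource $\ket{aux_a}$ into a product state tensored with resource $\ket{aux_b}$, while the orthogonal $\ket{00}$-sector simultaneously transforms the two resources into one another.

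Finally, if $\{\mu_j\}$ and $\{\lambda_i\}$ denote the Schmidt coefficients of $\ket{aux_a}$ and $\ket{aux_b}$ respectively, then (since local unitaries preserve Schmidt spectra and tensoring by a product state does not alter them) the first transformation forces $\{\mu_j\}$ to be $O(\epsilon^{1/8})$-close in $\ell_2$ to $\{\lambda_i\}$, while the second forces the doubled rescaled multiset $\{\mu_j/\sqrt{2}\}\sqcup\{\mu_j/\sqrt{2}\}$ to be $O(\epsilon^{1/8})$-close to $\{\lambda_i\}$. Combining yields the approximate self-similarity condition $\{\mu_j\}\approx\{\mu_j/\sqrt{2}\}\sqcup\{\mu_j/\sqrt{2}\}$, which is precisely the standard obstruction to finite-dimensional embezzlement (cf.\ \cite{van2003universal, leung2013coherent}): achieving this with error $\delta$ forces the Schmidt rank of $\ket{aux_a}$, and hence the local dimension of $S$, to be at least $2^{\Omega(1/\delta)}$, which with $\delta=O(\epsilon^{1/8})$ gives the desired $2^{\Omega(\epsilon^{-1/8})}$ bound. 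The main obstacle will be the coherence computation in the second paragraph: without the consistency check (c) the two self-testing statements of (a) and (b) are completely decoupled and could be satisfied by a strategy that maintains two independent entangled registers, placing no constraint on $W$. It is precisely the consistency bound that forces $W$ to respect the $\ket{00}$-sector coherently and thereby to implement a nontrivial coherent state exchange; carefully propagating this through the algebra while controlling error accumulation (dominated by $\epsilon^{1/8}$ from the 3-CHSH self-test) is the main technical content.
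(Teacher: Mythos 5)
Your proposal follows the same overall strategy as the paper through the key intermediate results, but diverges in the final step. Like the paper, you apply the self-testing theorems for parts (a) and (b) to obtain local unitaries and auxiliary states, extract from part (c) the consistency bound
\[
(\Pi_A\otimes I)\ket{\Psi}\approx_{O(\sqrt\epsilon)}(I\otimes\Pi_B)\ket{\Psi},
\]
and combine these to show that the local unitary $W=U_b U_a^\dagger$ approximately maps $\ket{00}\otimes\ket{aux_a}\mapsto\ket{00}\otimes\ket{aux_b}$ and $\tfrac{1}{\sqrt2}(\ket{11}+\ket{22})\otimes\ket{aux_a}\mapsto\ket{11}\otimes\ket{aux_b}$, both to error $O(\epsilon^{1/8})$. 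These are exactly the paper's Equations (2) and (5). Your route through the operator conjugation identity $W\cdot(I\otimes U_a^B\Pi_B(U_a^B)^\dagger)=(I\otimes U_b^B\Pi_B(U_b^B)^\dagger)\cdot W$ is slightly more elaborate than the paper's direct chaining of approximations, but reaches the same place and is correct.

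The genuine difference is the last step. The paper packages its Equations (2) and (5) into a strategy for the Leung--Toner--Watrous coherent state exchange game, converts the $O(\epsilon^{1/8})$ state-vector closeness into a winning probability of $1-O(\epsilon^{1/4})$, and then directly invokes the LTW theorem that any $d$-dimensional strategy wins with probability at most $1-\tfrac{1}{32\log^2(3d)}$, which gives $d\geq 2^{\Omega(\epsilon^{-1/8})}$. You instead pass to Schmidt coefficients (via the Mirsky/Hoffman--Wielandt continuity bound), derive the approximate self-similarity $\{\mu_j\}\approx_{O(\epsilon^{1/8})}\{\mu_j/\sqrt2\}\sqcup\{\mu_j/\sqrt2\}$, and assert that this forces Schmidt rank $2^{\Omega(1/\delta)}$. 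The conclusion is correct and the route is conceptually attractive (it bypasses re-introducing the LTW quantum-verifier game), but the assertion "$\delta$-approximate self-similarity implies rank $2^{\Omega(1/\delta)}$" is not available as a black-box citation in either \cite{van2003universal} or \cite{leung2013coherent} --- the LTW bound is stated in game-theoretic form, not as a Schmidt-spectrum lemma, and the naive observation (e.g.\ that $\mu_1\le O(\delta)$ forces rank $\Omega(1/\delta^2)$) gives only a polynomial bound. To make your final step rigorous you would essentially have to re-derive the quantitative content of the LTW proof in Schmidt-coefficient language, which is exactly the work that the paper's reduction avoids by citing their theorem as a black box. So treat the final step as a proof sketch that needs fleshing out; everything up to it is sound.
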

The proof of Theorem \ref{thm: main} can be broken down into two parts:
\begin{itemize}
    \item[(i)] First, we will show that performing well in parts $\textbf{(a)}, \textbf{(b)}$ and $\textbf{(c)}$ of the game imposes a certain structure on the strategy of the provers. 
    \item[(ii)] Second, we show that such a structured strategy can be used to play well also in the ``coherent state exchange'' game of Leung, Toner and Watrous \cite{leung2013coherent}. This reduction allows us to translate the lower bounds on the dimension of an approximately optimal strategy in the ``'coherent state exchange'' game to lower bounds on the dimension of an approximately optimal strategy for our game.
\end{itemize}




\begin{proof}[Proof of Theorem \ref{thm: main}]
Let $\left(\ket{\psi} \in \mathcal{H}_A \otimes \mathcal{H}_B, \{P_x^a\}, \{Q_y^b\}\right)$ be a strategy whose value in $G_{emb}$ is $\epsilon$-close to $\omega^*(G_{emb}) = \frac13 ( w_{\textnormal{3CHSH}}^* + w_{\textnormal{2CHSH}}^* + 1 )$. This implies that, for each part of the game, the strategy's expected score is $3\epsilon$-close to optimal. From each part we deduce the following: 
\begin{itemize}
    \item[\textbf{(a)}] From Theorem \ref{thm: self-testing gen chsh} (the case $d=3$), upon picking an appropriate constant $C>0$, there exists a local unitary $U: \mathcal{H}_A \otimes \mathcal{H}_B \rightarrow (\mathbb{C}^3)_{A_1} \otimes (\mathbb{C}^3)_{B_1} \otimes \mathcal{H}_{A'} \otimes \mathcal{H}_{B'}$, and an auxiliary state $\ket{aux} \in \mathcal{H}_{A'} \otimes \mathcal{H}_{B'}$ such that \begin{itemize}
        \item $\ket{\psi} \approx_{U, O(\epsilon^{1/8})} \frac{1}{\sqrt{3}}(\ket{00} + \ket{11} + \ket{22}) \otimes \ket{aux}$
        \item $P^0_{(\textnormal{``$3$-CHSH''}, 0)} \ket{\psi} \approx_{U,O(\epsilon^{1/8})} \frac{1}{\sqrt{3}}\ket{00} \otimes \ket{aux}$ 
    \end{itemize}
    \item[\textbf{(b)}] From Theorem \ref{thm: tilted chsh self testing sim}, there exists a local unitary $U': \mathcal{H}_A \otimes \mathcal{H}_B \rightarrow (\mathbb{C}^2)_{A_2} \otimes (\mathbb{C}^2)_{B_2} \otimes \mathcal{H}_{A''} \otimes \mathcal{H}_{B''}$, and an auxiliary state $\ket{aux'} \in \mathcal{H}_{A''} \otimes \mathcal{H}_{B''}$ such that \begin{itemize}
        \item $\ket{\psi} \approx_{U', O(\epsilon^{1/2})} \frac{1}{\sqrt{3}}(\ket{00} + \sqrt{2} \ket{11}) \otimes \ket{aux'}$
        \item $Q^0_{(\textnormal{``$\sim$tCHSH($\frac{1}{\sqrt{2}}$)''}, 0)} \ket{\psi} \approx_{U',O(\epsilon^{1/2})} \frac{1}{\sqrt{3}}\ket{00} \otimes \ket{aux'}$ 
    \end{itemize}
    \item[\textbf{(c)}] $P^0_{(\textnormal{``$3$-CHSH''}, 0)} \ket{\psi} \approx_{O(\epsilon^{1/2})} Q^0_{(\textnormal{``$\sim$tCHSH($\frac{1}{\sqrt{2}}$)''}, 0)} \ket{\psi}$
\end{itemize}
Notice that $\textbf{(a)}, \textbf{(b)}, \textbf{(c)} \implies$ the local unitary $\tilde{U} := (U')(U)^{-1} : (\mathbb{C}^3)_{A_1} \otimes (\mathbb{C}^3)_{B_1} \otimes \mathcal{H}_{A'} \otimes \mathcal{H}_{B'} \rightarrow (\mathbb{C}^2)_{A_2} \otimes (\mathbb{C}^2)_{B_2} \otimes \mathcal{H}_{A''} \otimes \mathcal{H}_{B''}$ is such that 
\begin{equation}
\label{eq: 1}
    \frac{1}{\sqrt{3}}(\ket{00} + \ket{11} + \ket{22}) \otimes \ket{aux} \approx_{\tilde{U}, O(\epsilon^{1/8})} \frac{1}{\sqrt{3}}(\ket{00} + \sqrt{2}\ket{11}) \otimes \ket{aux'},
\end{equation}
and moreover
\begin{equation}
\label{eq: 2}
\frac{1}{\sqrt{3}}\ket{00} \otimes \ket{aux} \approx_{\tilde{U}, O(\epsilon^{1/8})} \frac{1}{\sqrt{3}}\ket{00} \otimes \ket{aux'}.
\end{equation}

\eqref{eq: 1} and \eqref{eq: 2} immediately imply that \begin{equation}
\label{eq: main}
    \frac{1}{\sqrt{2}} (\ket{11} + \ket{22}) \otimes \ket{aux} \approx_{\tilde{U}, O(\epsilon^{1/8})} \ket{11} \otimes \ket{aux'}.
\end{equation}
We claim that the local unitary $\tilde{U}$ can be used to approximately win the ``coherent state exchange'' game of Leung, Toner and Watrous \cite{leung2013coherent}. More precisely, since Equation \eqref{eq: main} is $O(\epsilon^{1/8})$-approximate (with respect to Euclidean norm), we claim that one can construct a strategy which employs $\tilde{U}$, and in which the provers' initial state is $\ket{aux}$, which wins the game of \cite{leung2013coherent} with probability $1-O(\epsilon^{1/4})$. Assuming this claim is true, the rest of the proof is straightforward: it was shown in \cite{leung2013coherent} that the winning probability of any strategy in the ``coherent state exchange game'' is upper bounded by $1-\frac{1}{32\log^2(3d)}$, where $d$ is the dimension of the states used; this implies that it must be $$\frac{1}{32\log^2(3d)} = O(\epsilon^{1/4}) \implies d = 2^{\Omega(\epsilon^{-\frac{1}{8}})}.$$
To conclude the proof of Theorem \ref{thm: main}, we prove the above claim. 

The ``coherent state exchange'' game of \cite{leung2013coherent} between a quantum referee and two non-communicating provers, proceeds as follows: \begin{itemize}
    \item The referee initializes a qubit register $\textsf{R}$ and qutrit registers $\textsf{S}$ and $\textsf{T}$ in the state
\begin{equation}
\label{eq: state}
\frac{1}{\sqrt{2}}(\ket{0}_{\textsf{R}}\ket{00}_{\textsf{S}\textsf{T}} + \ket{1}_{\textsf{R}}\ket{\phi^+}_{\textsf{S}\textsf{T}}),
\end{equation}
where $\ket{\phi^+} = \frac{1}{\sqrt{2}} (\ket{00}+ \ket{11})$. The referee sends registers $\textsf{S}$ and $\textsf{T}$ to Alice and Bob respectively. 
\item The referee receives single-qubit registers $\textsf{A}$ and $\textsf{B}$ from Alice and Bob respectively. The triple $(\textsf{R}, \textsf{A}, \textsf{B})$ is measured with projective measurement $\{\Pi_0, \Pi_1\}$, where $\Pi_0 = I - \ket{\gamma}\bra{\gamma}$ and $\Pi_1 = \ket{\gamma}\bra{\gamma}$, and $\ket{\gamma} = \frac{1}{\sqrt{2}} (\ket{000}+ \ket{111})$.
\end{itemize}
Consider the following strategy of the provers for this game. They start by sharing the state $\ket{aux} \in \mathcal{H}_{A'} \otimes \mathcal{H}_{B'}$. Upon receiving the qutrit registers $\textsf{S}$ and $\textsf{T}$ of the state \eqref{eq: state}, they apply $\tilde{U}$ to registers $(\mathbb{C}^3)_{\textsf{S}} \otimes (\mathbb{C}^3)_{\textsf{T}} \otimes \mathcal{H}_{A'} \otimes \mathcal{H}_{B'}$ (up to relabelling registers $A_1$ and $B_1$ as $\textsf{S}$ and $\textsf{T}$), obtaining a state in $(\mathbb{C}^2)_{A_2} \otimes (\mathbb{C}^2)_{B_2} \otimes \mathcal{H}_{A''} \otimes \mathcal{H}_{B''}$. Equations \eqref{eq: 2} and \eqref{eq: main} imply that the resulting state on registers $\textsf{R}, A_2, B_2, A'', B''$ is $O(\epsilon^{1/8})$-close to $\frac{1}{\sqrt{2}}(\ket{000} + \ket{111}) \otimes \ket{aux'}$. And hence the state on $\textsf{R}, A_2, B_2$ is $O(\epsilon^{1/8})$-close to the desired state (in Euclidean norm). Qubit registers $A_2$ and $B_2$ are then sent back to the referee as $\textsf{A}$ and $\textsf{B}$. Converting the $O(\epsilon^{1/8})$-closeness to a probability of winning in the game, gives a lower bound of $1-O(\epsilon^{1/4})$, and thus concludes the proof. We remark that the self-testing arguments applied in this proof also hold when the uncharacterized strategy of the provers is not a priori assumed to be finite-dimensional, but is instead allowed to infinite-dimensional (on separable Hilbert spaces).

\end{proof}

\section{Non-closure of the set of quantum correlations}
\label{sec: non-closure}

A corollary of Proposition \ref{prop: completeness} and Theorem \ref{thm: main} (completeness and soundness for our game) is that the set $\m C_{qs}$ of quantum correlations induced by, possibly infinite-dimensional, quantum strategies (in the tensor product model) is not closed, i.e. $\m C_{qs} \neq \m C_{qa}$, where the latter is the closure. For precise definitions of these sets see \cite{coladangelo2018unconditional}. We use superscripts to denote question and answer set sizes. For instance $\m C_{qs}^{m,n,r,s}$ is on question sets of size $m,n$ and answer sets of size $r,s$.

\begin{cor}
$\m C_{qs}^{5,6,3,3} \neq  \m C_{qa}^{5,6,3,3}$.
\end{cor}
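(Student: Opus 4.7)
The plan is to combine Proposition \ref{prop: completeness} and Theorem \ref{thm: main} with a compactness argument. First I would let $\{p_d\}_{d \in \N}$ be the sequence of correlations induced by the finite-dimensional strategies $\{S_d\}$ of Definition \ref{def: ideal strategy}. Each $p_d$ lies in $\m{C}_q^{5,6,3,3} \subseteq \m{C}_{qs}^{5,6,3,3}$, and by Proposition \ref{prop: completeness} their values satisfy $\omega(p_d, G_{emb}) \to \omega^*(G_{emb})$. Since the set of correlations with fixed question/answer sizes is a compact subset of $[0,1]^{5\cdot 6\cdot 3\cdot 3}$, passing to a subsequence I obtain a limit $p^* \in \m{C}_{qa}^{5,6,3,3}$. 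Continuity of $\omega(\cdot, G_{emb})$ as a linear functional of the correlation then yields $\omega(p^*, G_{emb}) = \omega^*(G_{emb})$, so $p^*$ is an \emph{optimal} correlation for $G_{emb}$.

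Next I would show $p^* \notin \m{C}_{qs}^{5,6,3,3}$, which together with $p^* \in \m{C}_{qa}^{5,6,3,3}$ gives strict containment. Suppose toward contradiction that $p^*$ is induced by some (possibly infinite-dimensional) quantum strategy $S^*$ on separable Hilbert spaces. Then $\omega(S^*, G_{emb}) = \omega^*(G_{emb})$, so $S^*$ attains the supremum exactly; in particular, the three sub-tests \textbf{(a)}, \textbf{(b)}, \textbf{(c)} are each won with their respective maximal scores. As noted in the remark at the end of the proof of Theorem \ref{thm: main}, the self-testing statements from Theorems \ref{thm: self-testing gen chsh} and \ref{thm: tilted chsh self testing sim} apply to strategies on arbitrary separable Hilbert spaces, so the construction of the local unitary $\tilde{U}$ in that proof goes through with $\epsilon = 0$, producing exact analogues of the relations \eqref{eq: 2} and \eqref{eq: main}.

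Plugging these exact identities into the reduction to the Leung--Toner--Watrous coherent state exchange game produces a strategy using $\tilde{U}$ and the auxiliary state $\ket{aux}$ that wins that game with probability exactly $1$. This contradicts the impossibility result of \cite{leung2013coherent}, which forbids any strategy on separable Hilbert spaces from winning the coherent state exchange game perfectly; equivalently, perfect embezzlement of an EPR pair is impossible in the tensor product model, even with infinite-dimensional entanglement. This contradiction forces $p^* \notin \m{C}_{qs}^{5,6,3,3}$, hence $\m{C}_{qs}^{5,6,3,3} \neq \m{C}_{qa}^{5,6,3,3}$.

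The hard part will be verifying that every ingredient of the soundness pipeline genuinely survives the passage from finite-dimensional to separable Hilbert spaces: specifically, that the self-testing theorems for $G_{3\textnormal{-CHSH}}$ and $G_{\sim\textnormal{tCHSH}(\frac{1}{\sqrt{2}})}$ apply in the infinite-dimensional setting, and that the Leung--Toner--Watrous game admits no exact winning strategy on separable Hilbert spaces. The first point is flagged in the remark following the proof of Theorem \ref{thm: main}; the second is the physical fact that perfect embezzlement does not exist in the tensor product model. Once both are in hand, the compactness/limit step above converts the quantitative dimension lower bound of Theorem \ref{thm: main} into the qualitative non-closure statement.
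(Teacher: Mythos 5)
Your proposal is correct and follows essentially the same route as the paper: take a limit of the correlations from Proposition~\ref{prop: completeness} to obtain an optimal correlation in $\m C_{qa}^{5,6,3,3}$, and then run the soundness reduction at $\epsilon = 0$ (valid over separable Hilbert spaces) to derive a perfect embezzlement strategy, contradicting its impossibility. The only cosmetic differences are that you invoke compactness and pass to a subsequence rather than asserting the limit directly, and you cite the impossibility result of Leung--Toner--Watrous wholesale whereas the paper also sketches the short Schmidt-coefficient argument for why perfect embezzlement cannot exist.
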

\begin{proof}
In the proof of Theorem \ref{thm: main}, we argued that any strategy with value $\omega^*(G_{emb}) - \epsilon$ in our game $G_{emb}$ can be used to construct a strategy that embezzles an EPR pair into a product state, up to $O(\epsilon^{1/8})$ error in Euclidean norm. This implies that no strategy has value exactly $\omega^*(G_{emb})$. Suppose otherwise for a contradiction. Then, by the reduction in the proof of Theorem \ref{thm: main}, we can construct a strategy that wins the game of \cite{leung2013coherent} with probability $1$. From \cite{leung2013coherent}, this is known to imply existence of a strategy that embezzles perfectly (the argument that shows this implication in \cite{leung2013coherent} is phrased for finite-dimensional strategies, but it holds also for infinite-dimensional ones). A perfect embezzling strategy consists of a state $\ket{\Psi} \in \m H_{A'} \otimes \m H_{B'}$ and a local unitary $U = U_{AA'} \otimes U_{BB'}$ such that $U \ket{\phi^+}_{AB} \otimes \ket{\Psi}_{A'B'} = \ket{00}_{AB} \otimes \ket{\Psi}_{A'B'}$. Since Schmidt coefficients are preserved under local unitaries, it is clear that, whatever the Schmidt coefficients of $\ket{\Psi}$ are, the Schmidt coefficients of the LHS and RHS are different. This gives a contradiction.

On the other hand, Proposition \ref{prop: completeness} gives a sequence of strategies whose value tends to $\omega^*(G_{emb})$. If one considers the sequence of correlations induced by such strategies, it is clear that such a sequence has a limit, and that the limiting correlation has value $\omega^*(G_{emb})$. Such a limiting correlation is thus in $\m C_{qa}^{5,6,3,3}$ but not in $\m C_{qs}^{5,6,3,3}$.

\end{proof}

We emphasize that strictly stronger separations (for question sets of size $5$ and answer sets of size $2$) are known \cite{dykema2017non, musat2018non}. The latter appeared after the original breakthrough proof of Slofstra, for much larger question and answer sets \cite{slofstra2019set}. What stands out about our proof is that, unlike all previous proofs, it does not involve any representation theoretic machinery. 

\section*{Acknowledgements}
The author thanks William Slofstra and Thomas Vidick for useful comments on an earlier version of this paper. The author also thanks the latter for helpful discussions. The author thanks Vern Paulsen for a useful email exchange about the game in \cite{dykema2017non}.
The author was supported by the Kortschak Scholars program and AFOSR YIP award number FA9550-16-1-0495.

\bibliographystyle{halpha}
\bibliography{references}

\end{document}